\documentclass[a4paper,fleqn]{cas-dc}

\usepackage[numbers]{natbib}
\usepackage{xcolor}
\usepackage{hyperref}
\usepackage{graphicx}
\usepackage{subcaption}

\usepackage{enumerate}
\usepackage{flushend}

\newcommand{\ym}{\mathsf{y}_{\mathsf{m}}}

\newcommand{\yf}{\mathsf{y}_{\mathsf{f}}}

\newcommand{\baryf}{\bar{\mathsf{y}}_{\mathsf{f}}}

\newtheorem{theorem}{Theorem}
\newtheorem{lemma}{Lemma}
\newtheorem{proposition}{Proposition}
\newdefinition{remark}{Remark}
\newdefinition{definition}{Definition}
\newproof{proof}{Proof}
\newproof{pot}{Proof of Theorem \ref{thm}}
\newdefinition{assumption}{Assumption}

\definecolor{myred}{rgb}{0.7,0.1,0.16}
\definecolor{myblue}{rgb}{0,0.32,0.7}
\definecolor{mygreen}{rgb}{0.133,0.545,0.133}

\graphicspath{{figures/}}

\begin{document}
\let\WriteBookmarks\relax
\def\floatpagepagefraction{1}
\def\textpagefraction{.001}

\title [mode = title]{Nonholonomic Source Seeking by Torque Tuning: Local and Semi-Global Feedbacks}  
\shorttitle{Nonholonomic Source Seeking by Torque Tuning: Local and Semi-Global Feedbacks}
\shortauthors{B. Wang}


\author[1]{Bo Wang}\cormark[1]
\ead{bwang1@ccny.cuny.edu}
\credit{}
\affiliation[1]{organization={Department of Mechanical Engineering, The City College of New York, The City University of New York},
            city={New York},
            state={NY 10031},
            country={USA}}


\cortext[1]{Corresponding author.}

\nonumnote{\textit{E-mail addresses:} {bwang1@ccny.cuny.edu} (B. Wang).}

\begin{abstract}
This paper studies source seeking for a torque-controlled nonholonomic vehicle with a laterally displaced scalar sensor. The vehicle has constant forward speed, while its yaw motion is controlled by torque input with unknown inertia and damping. The objective is to steer the vehicle to a source-centered circular motion so that the lateral sensor approaches the unknown source, without using position, heading, source-location, gradient, or source-value information. The proposed torque law combines a fast oscillatory component, which generates averaged steering through symmetric-product approximation, with a slowly tuned bias component, which selects the desired orbit. Two bias-tuning designs are developed. The first is an output-feedback design using only the scalar measurement; it applies a Lie-bracket extremum-seeking update and yields local practical stability. The second is a velocity-assisted design using forward-speed and yaw-rate measurements; it tunes the bias through the yaw-rate tracking error and yields a globally asymptotically stable averaged system, implying semi-global practical stability of the original system. Simulations illustrate the proposed designs.
\end{abstract}

\begin{keywords}
 Source seeking \sep Averaging \sep Nonholonomic vehicles \sep Practical stability
\end{keywords}

\maketitle

\section{Introduction}
\label{sec:introduction}

Source seeking concerns the problem of steering a dynamical system toward the extremum of an unknown spatial signal using real-time measurements of the signal value \cite{krstic2000stability,ariyur2003real,scheinker2024100}. This problem arises in applications such as locating chemical, thermal, electromagnetic, or acoustic sources, where the source position, signal gradient, and extremal value are not available a priori. Source-seeking techniques have been used in diverse engineering systems, including autonomous vehicles \cite{ZhangArnoldGhodsSiranosianKrstic2007,ZhangSiranosianKrstic2007}, marine vessels \cite{wang2023underactuated}, satellites \cite{wang2025extremum}, and particle accelerators \cite{williams2024experimental}, among others. In this paper, we consider source seeking for a torque-controlled nonholonomic vehicle moving in an unknown static scalar field. 
The vehicle is equipped with a suitable sensor that measures the signal value at the sensor location. The objective is to drive the vehicle toward the source, or equivalently to steer the vehicle toward a source-centered circular motion. Following the standard convention in extremum seeking, the source is treated as the unique minimum of the signal.

Velocity-level nonholonomic source seeking has been studied extensively for kinematic unicycle models, where the forward and angular velocities are treated as directly assignable control inputs. 
One line of work tuned the forward velocity while maintaining a constant angular velocity, using extremum-seeking dithers to recover gradient information without position measurements \cite{ZhangArnoldGhodsSiranosianKrstic2007,frihauf2014single,todorovski2024newton}. A complementary line of work keeps the forward speed constant and tunes the angular velocity to steer the vehicle toward a source-centered orbit \cite{CochranKrstic2009,RaischKrstic2017,DurrKrsticScheinkerEbenbauer2017}. 
Source-seeking schemes that combine forward-speed regulation with steering control have also been developed \cite{GhodsKrstic2010}. 
Related velocity-regulation ideas have also been developed for three-dimensional nonholonomic vehicles by tuning forward, pitch, and yaw velocities \cite{lin20153}. 
These results provide a mature velocity-level theory for nonholonomic source seeking, but they do not address source seeking under the force- or torque-actuated dynamics.

More recently, source seeking has been extended from velocity-level kinematic models to second-order dynamic models, where the inputs are forces or torques rather than velocities. This shift is natural for robotic vehicles, since inertia prevents instantaneous assignment of the forward or angular velocity. A key tool in this direction is the symmetric-product approximation for mechanical systems under large-amplitude, high-frequency inputs \cite{bullo2002averaging}. This framework has enabled extremum- and source-seeking designs for force- and torque-actuated mechanical systems \cite{suttner2022extremum}, including fully-actuated systems on Lie groups \cite{suttner2023extremum} and underactuated surface vessels \cite{wang2023underactuated}.
For nonholonomic vehicles, Suttner and Krsti{\'c} developed force-actuated source-seeking laws for unicycles using symmetric-product approximation \cite{Suttner2019NOLCOS,SuttnerKrstic2020IFAC}. They further extended this approach to torque-actuated nonholonomic source seeking, including planar unicycles and three-dimensional vehicles driven by pitch and yaw torque inputs \cite{suttner2022source,suttner2023nonholonomic}.
Subsequent work addressed local extrema in dynamic-unicycle source seeking via circular sensing and divergence-theorem-based spatial averaging, with delayed measurements used in one design to extract nonlocal descent information \cite{suttner2023nonlocal,suttner2024overcoming}.

The work most closely related to the present paper is \cite{suttner2022source}, which proposed a source-seeking law for a torque-controlled unicycle with a forward-displaced sensor. Their design uses a large-amplitude high-frequency torque input, and the symmetric-product averaging analysis shows that the resulting averaged torque contains gradient-dependent steering information. Under suitable radial-symmetry and growth conditions, the averaged system admits a locally stable circular motion around the source. However, the radius of the limiting orbit is determined by physical and design parameters, and the sensor is not designed to converge to the source location. 
In contrast, the present paper uses a laterally displaced sensor and decomposes the yaw torque into a fast oscillatory component for averaged steering and a slowly tuned bias component for selecting the source-centered orbit. This structure makes the bias torque an explicit tuning target and leads to two designs with different sensing requirements: an output-feedback design using only scalar measurements and a velocity-assisted design using additional speed and yaw-rate measurements.
 
The main contributions of this paper are threefold. First, we introduce a torque-tuning framework for nonholonomic source seeking with a laterally displaced scalar sensor. The yaw torque is separated into a fast oscillatory component for averaged steering and a slowly tuned bias component for selecting the desired source-centered orbit. 
Second, we develop an output-feedback design that uses only the real-time scalar measurement. The bias is tuned through a Lie-bracket extremum-seeking update applied to the steady response of the fast-averaged dynamics, yielding a local practical stability result.
Third, we develop a velocity-assisted design for the case where forward-speed and yaw-rate measurements are available. This design tunes the bias directly through the yaw-rate tracking error, yielding a globally asymptotically stable averaged system and a semi-global practical stability result for the original closed-loop system.

The remainder of the paper is organized as follows. Section \ref{sec:practical_stability} recalls the practical stability notions used in the analysis. Section \ref{sec:problem_statement} formulates the torque-controlled source-seeking problem. Section \ref{sec:control} presents the two proposed torque-tuning designs and establishes the corresponding local and semi-global practical stability results. Section \ref{sec:simulation} illustrates the designs through numerical simulations, and Section \ref{sec:conclusion} concludes the paper.

\section{Practical Stability}
\label{sec:practical_stability}
This section reviews practical stability results for parameter-dependent systems from \cite{moreau2000practical,DurrStankovicEbenbauerJohansson2013,durr2017extremum}, which will be used in the subsequent analysis.
For each $\varepsilon>0$, let
$f^\varepsilon:\mathbb{R}_{\ge 0}\times\mathbb{R}^n\to\mathbb{R}^n$
be a time-dependent vector field, and let
$f:\mathbb{R}^n\to\mathbb{R}^n$
be an autonomous vector field. Consider two systems 
\begin{equation}\label{eq:f-epsilon} 
\dot{x}=f^\varepsilon(t,x),\quad x(t_0)=x_0 
\end{equation} 
and 
\begin{equation}\label{eq:f} 
\dot{\bar{x}}=f(\bar{x}),\quad \bar{x}(t_0)=\bar{x}_0. 
\end{equation}
Assume that, for every $\varepsilon>0$, $t_0\ge 0$, and $x_0\in\mathbb{R}^n$, \eqref{eq:f-epsilon} admits a unique maximal solution defined on $[t_0,\infty)$, and that the same holds for \eqref{eq:f} for every $\bar{x}_0\in\mathbb{R}^n$.
For every $\varepsilon>0$ and $t\ge 0$, let
$\phi_t^\varepsilon:\mathbb{R}^n\to\mathbb{R}^n$
be a bijection. Along the solutions of \eqref{eq:f-epsilon}, consider the time-dependent change of coordinates
\begin{equation}\label{eq:change-of-variables}
\tilde{x}=\phi_t^\varepsilon(x).
\end{equation}
The following definitions and results are adapted from \cite{moreau2000practical}.

\begin{definition}
We say that $x^\star\in\mathbb{R}^n$ is \emph{practically uniformly stable for \eqref{eq:f-epsilon} in the variables \eqref{eq:change-of-variables}} if, for every $\eta>0$, there exist $\delta>0$ and $\varepsilon_0>0$ such that, for every $\varepsilon\in(0, \varepsilon_0]$, every $t_0\ge 0$, and every $\tilde{x}_0\in\mathbb{R}^n$ satisfying $|\tilde{x}_0-x^\star|\le \delta$, the maximal solution $x(t)$ of \eqref{eq:f-epsilon} with $x(t_0)=(\phi_{t_0}^\varepsilon)^{-1}(\tilde{x}_0)$ satisfies $|\phi_t^\varepsilon(x(t))-x^\star|\le \eta$ for all $t\ge t_0$.
\end{definition}

\begin{definition}\label{def:1}
We say that $x^\star\in\mathbb{R}^n$ is \emph{locally practically uniformly asymptotically stable for \eqref{eq:f-epsilon} in the variables \eqref{eq:change-of-variables}} if the following two properties hold.
\begin{enumerate}[1)]
    \item \emph{Practical Uniform Stability.} $x^\star\in\mathbb{R}^n$ is practically uniformly stable for \eqref{eq:f-epsilon} in the variables \eqref{eq:change-of-variables}.
    \item \emph{Local Practical Uniform Attraction.} There exists $r>0$ such that, for every $\eta>0$, there exist $T>0$ and $\varepsilon_0>0$ such that, for every $\varepsilon\in(0, \varepsilon_0]$, every $t_0\ge 0$, and every $\tilde{x}_0\in\mathbb{R}^n$ satisfying $|\tilde{x}_0-x^\star|\le r$, the maximal solution $x(t)$ of \eqref{eq:f-epsilon} with $x(t_0)=(\phi_{t_0}^\varepsilon)^{-1}(\tilde{x}_0)$ satisfies $|\phi_t^\varepsilon(x(t))-x^\star|\le \eta$ for all $t\ge t_0+T$.
\end{enumerate}
\end{definition}

\begin{definition}\label{def:2}
We say that $x^\star\in\mathbb{R}^n$ is \emph{semi-globally practically uniformly asymptotically stable for \eqref{eq:f-epsilon} in the variables \eqref{eq:change-of-variables}} if the following three properties hold.
\begin{enumerate}[1)]
    \item \emph{Practical Uniform Stability.} $x^\star\in\mathbb{R}^n$ is practically uniformly stable for \eqref{eq:f-epsilon} in the variables \eqref{eq:change-of-variables}.

    \item \emph{Semi-Global Practical Uniform Boundedness.} For every $r>0$, there exist $R>0$ and $\varepsilon_0>0$ such that, for every $\varepsilon\in(0, \varepsilon_0]$, every $t_0\ge 0$, and every $\tilde{x}_0\in\mathbb{R}^n$ satisfying $|\tilde{x}_0-x^\star|\le r$, the maximal solution $x(t)$ of \eqref{eq:f-epsilon} with $x(t_0)=(\phi_{t_0}^\varepsilon)^{-1}(\tilde{x}_0)$ satisfies $|\phi_t^\varepsilon(x(t))-x^\star|\le R$ for all $t\ge t_0$.

    \item \emph{Semi-Global Practical Uniform Attraction.} For every $r>0$ and every $\eta>0$, there exist $T>0$ and $\varepsilon_0>0$ such that, for every $\varepsilon\in(0, \varepsilon_0]$, every $t_0\ge 0$, and every $\tilde{x}_0\in\mathbb{R}^n$ satisfying $|\tilde{x}_0-x^\star|\le r$, the maximal solution $x(t)$ of \eqref{eq:f-epsilon} with $x(t_0)=(\phi_{t_0}^\varepsilon)^{-1}(\tilde{x}_0)$ satisfies $|\phi_t^\varepsilon(x(t))-x^\star|\le \eta$ for all $t\ge t_0+T$.
\end{enumerate}
\end{definition}

\begin{definition}
We say that \emph{the solutions of \eqref{eq:f-epsilon} in the variables \eqref{eq:change-of-variables} approximate the solutions of \eqref{eq:f}} if, for every compact set $K\subset\mathbb{R}^n$ and every $\eta,T>0$, there exists $\varepsilon_0>0$ such that, for every $\varepsilon\in(0, \varepsilon_0]$, every $t_0\ge 0$, and every $\bar{x}_0\in K$, the following implication holds: If the solution $\bar{x}(t)$ of \eqref{eq:f} with $\bar{x}(t_0)=\bar{x}_0$ satisfies $\bar{x}(t)\in K$ for all $t\in[t_0,t_0+T]$, then the solution $x(t)$ of \eqref{eq:f-epsilon} with $x(t_0)=(\phi_{t_0}^\varepsilon)^{-1}(\bar{x}_0)$ satisfies $|\phi_t^\varepsilon(x(t))-\bar{x}(t)|\le \eta$ for all $t\in[t_0,t_0+T]$.
\end{definition}

\begin{lemma}\label{lem:1}
    Suppose that the solutions of \eqref{eq:f-epsilon} in the variables \eqref{eq:change-of-variables} approximate the solutions of \eqref{eq:f}. Then the following statements hold.
    \begin{enumerate}
        \item[(i)] If $x^\star$ is a locally asymptotically stable equilibrium of \eqref{eq:f}, then $x^\star$ is locally practically uniformly asymptotically stable for \eqref{eq:f-epsilon} in the variables \eqref{eq:change-of-variables}.

        \item[(ii)] If $x^\star$ is a globally asymptotically stable equilibrium of \eqref{eq:f}, then $x^\star$ is semi-globally practically uniformly asymptotically stable for \eqref{eq:f-epsilon} in the variables \eqref{eq:change-of-variables}.
    \end{enumerate}
\end{lemma}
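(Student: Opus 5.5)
We prove both parts by the standard trajectory-approximation argument of \cite{moreau2000practical}, combining uniform stability properties of the autonomous system \eqref{eq:f} with finite-horizon closeness on compact sets. Because \eqref{eq:f} is autonomous, its stability and attraction are automatically uniform in $t_0$. The approximation property is also uniform in $t_0$ and in initial conditions in a compact set. Hence every constant below is independent of $t_0$.

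\textbf{Part (i).} Pick $r_0>0$ with $B_{r_0}(x^\star)$ in the basin of attraction of $x^\star$. By compactness and continuity of the flow, there is a uniform attraction time on compact subsets of the basin. Given $\eta>0$, use Lyapunov stability of \eqref{eq:f} to choose $\delta_1\le\eta/2$ such that solutions starting in $B_{\delta_1}$ stay in $B_{\eta/2}$ for all time. Next choose $\delta\le\delta_1/2$ such that solutions starting in $B_{2\delta}$ stay in $B_{\delta_1/2}$. Let $T$ be a uniform time after which solutions from $\bar B_{\delta_1}$ lie in $B_{\delta/2}$.

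Apply the approximation property with the compact set $K=\bar B_{\eta/2}(x^\star)$, horizon $T$, and tolerance $\min(\delta/2,\eta/2)$. This gives $\varepsilon_0$. Now run the following induction over windows $[t_0+kT,t_0+(k+1)T]$. Suppose $\tilde x$ starts a window in $\bar B_{\delta_1}$. Restart the averaged solution at the current value of $\tilde x$; this is legitimate because the approximation definition allows any $t_0$ and any initial point in $K$, with $x$ and $\bar x$ sharing the initial condition through $\phi_{t}^\varepsilon$. On the window, $\tilde x$ stays within $\eta/2+\eta/2=\eta$ of $x^\star$. At the window's end it lies within $\delta/2+\delta/2=\delta\le\delta_1$. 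The induction therefore closes, which yields practical uniform stability. The same windowing argument, started from $|\tilde x_0-x^\star|\le r$ with $r=\delta_1$, gives $\tilde x\in B_\eta$ for all $t\ge t_0+T$, which is local practical uniform attraction.

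\textbf{Part (ii).} Given $r$, global asymptotic stability provides a class-$\mathcal{KL}$ bound $|\bar x(t)-x^\star|\le\beta(|\bar x_0-x^\star|,t-t_0)$. Let $R_0=\beta(r,0)$ and take $K=\bar B_{R_0+1}$. Repeat the windowing argument of Part (i), but start from the larger ball $B_r$. During the first window, $\tilde x$ stays within $R_0+1$ of $x^\star$, provided the tolerance is at most $1$. Choose the window length $T$ so that $\beta(r,T)$ is small, say less than $\delta/2$, where $\delta$ is the practical-stability radius associated with the target $\eta$. Then after one window $\tilde x$ enters $B_\delta$, and the local argument of Part (i) takes over and keeps $\tilde x$ in $B_\eta$ forever. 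This gives semi-global boundedness with $R=\max(R_0+1,\eta)$, together with semi-global attraction. Practical uniform stability itself was already shown in Part (i).

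\textbf{Main obstacle.} The delicate step is the restarting of the comparison solution at each window boundary. The approximation only holds on finite horizons, and only while $\bar x$ stays in $K$. One must therefore check two things. First, the restarted averaged trajectories remain in $K$ on the whole window, which the nested choice of radii ensures. Second, the accumulated errors never leave the invariant ball; the contraction within each window (from $\delta_1$ down to $\delta$) absorbs the per-window error, so they do not.
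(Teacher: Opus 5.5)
The paper gives no proof of this lemma; it is adapted from \cite{moreau2000practical}. Your scheme is the standard argument behind it: finite-horizon closeness on compact sets, restarted at window boundaries, with uniformity in $t_0$ coming from the autonomy of \eqref{eq:f} and from the approximation property. Your practical-stability induction in Part (i) is sound, provided you also impose $\delta_1<r_0$, so that $\bar B_{\delta_1}(x^\star)$ lies in the basin and the uniform time $T$ exists. The auxiliary radius defined via $B_{2\delta}$ and $B_{\delta_1/2}$ is never used. Part (ii) is also sound. Just fix one target, say $\eta=1$, when extracting $R$, because in Definition~\ref{def:2} the bound $R$ may depend on $r$ only.

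There is, however, a genuine gap in the local attraction claim of Part (i). In Definition~\ref{def:1} the radius $r$ is quantified before $\eta$: a single $r$ must work for every $\eta>0$. You take $r=\delta_1$, but $\delta_1\le\eta/2$ was chosen after $\eta$, so your $r$ collapses to $0$ as $\eta\to 0$. What you have shown is only that solutions starting in $\bar B_{\delta_1(\eta)}$ remain in $B_\eta$ for all $t\ge t_0$, which is practical stability restated. The actual content of attraction is not established: trajectories from one fixed neighborhood must enter arbitrarily small neighborhoods after a uniform time. Your windowing induction cannot deliver this, since it requires every window to start inside $\bar B_{\delta_1(\eta)}$.

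The repair is the two-phase argument you already use in Part (ii):
\begin{enumerate}
\item Fix $r<r_0$ once, together with a compact $K_0$ that contains all averaged solutions issued from $\bar B_r(x^\star)$; such $r$ and $K_0$ exist by Lyapunov stability.
\item Given $\eta$, let $\delta(\eta)$ and $\varepsilon_1(\eta)$ be the practical-stability data. Choose $T$ so that every averaged solution from $\bar B_r(x^\star)$ lies in $B_{\delta/2}(x^\star)$ at time $t_0+T$.
\item Apply the approximation property once, on $K_0$ over $[t_0,t_0+T]$ with tolerance $\delta/2$, to get $\varepsilon_2$. For $\varepsilon\le\min(\varepsilon_1,\varepsilon_2)$ this gives $|\tilde x(t_0+T)-x^\star|<\delta$.
\item Restart practical stability at $t_0+T$; it keeps $\tilde x$ within $\eta$ of $x^\star$ for all $t\ge t_0+T$. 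The restart is legitimate by uniqueness of solutions and uniformity in the initial time.
\end{enumerate}
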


\section{Problem Statement}
\label{sec:problem_statement}

Consider a nonholonomic mobile robot moving in the plane. Let $(x,y)\in\mathbb R^2$ denote the position of the vehicle center in an inertial frame, let $\theta\in\mathbb R$ denote the heading angle, and let $\omega\in\mathbb R$ denote the yaw rate. The kinematics are
\begin{equation}
\dot x=v\cos\theta,\qquad
\dot y=v\sin\theta,\qquad
\dot\theta=\omega ,
\label{eq:kinematics}
\end{equation}
where the forward speed $v>0$ is a constant maintained by a standard low-level velocity servo. We assume that $v$ belongs to a known compact interval, i.e., $v\in[v_{\min},v_{\max}]$, where $0<v_{\min}<v_{\max}$.
The yaw dynamics are torque-actuated and given by
\begin{equation}
J\dot\omega=-d_\omega\omega+\tau ,
\label{eq:yaw_dynamics}
\end{equation}
where $\tau\in\mathbb R$ is the yaw torque input, $J>0$ is the rotational inertia, and $d_\omega>0$ is the yaw damping coefficient. The exact values of $J$ and $d_\omega$ are unknown. We assume that the damping coefficient satisfies known bounds, i.e., $d_\omega\in [d_{\min},d_{\max}]$, with $0<d_{\min}<d_{\max}$.

The vehicle moves in an unknown, static, radially symmetric scalar field
\begin{equation}
\Phi(x,y)=\psi\left((x-x^\star)^2+(y-y^\star)^2\right),
\label{eq:source_field}
\end{equation}
where $(x^\star,y^\star)\in\mathbb R^2$ is the unknown source location. 
The radial profile is assumed to satisfy the following condition.

\begin{assumption}\label{ass:source_field}
    The function $\psi:\mathbb{R}_{\ge 0}\to\mathbb R$ is twice continuously differentiable, satisfies $\psi'(s)>0$ for all $s\ge0$, and satisfies $\psi(s)\to\infty$ as $s\to\infty$.
\end{assumption}

Under Assumption~\ref{ass:source_field}, the scalar field $\Phi$ has a unique minimum at the source, where the minimum value $\Phi(x^\star,y^\star)=\psi(0)$ is unknown. The task of the vehicle is to seek the minimum point of the scalar field. 

\begin{figure}
    \centering
    \includegraphics[width=0.45\linewidth]{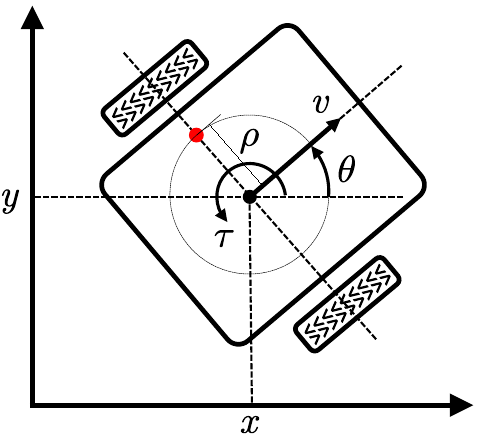}
    \caption{\unboldmath Geometry of the torque-controlled nonholonomic vehicle with a left-side scalar sensor at lateral offset $\rho$.}
    \label{fig:unicycle}
\end{figure}

A scalar sensor is mounted on the left side of the vehicle at a lateral offset $\rho>0$, as shown in Fig. \ref{fig:unicycle}. 
Define the position error vector in the body frame by
\begin{equation}\label{eq:error}
    \begin{bmatrix}
    x_e \\ y_e
    \end{bmatrix}:=
    \begin{bmatrix}
    \cos\theta & \sin\theta \\
    -\sin\theta & \cos\theta
    \end{bmatrix}
    \begin{bmatrix}
    x-x^\star \\ y-y^\star
    \end{bmatrix}.
\end{equation}
Hence, the value of the signal function $\psi$ at the position of the sensor is 
\begin{equation}
\ym=\psi(s),
\quad
s:=x_e^2+(y_e+\rho)^2.
\label{eq:measurement}
\end{equation}
The control objective is to design a yaw torque law using the real-time scalar measurement $\ym(t)$ such that the closed-loop vehicle approaches the source-centered circular motion. Measurements of the configuration variables $x,y$ or $\theta$ are not available.
The desired motion is characterized by $(x_e,y_e,\omega) =(0,-\rho,v/\rho)$.
At this motion, the vehicle travels on a circle of radius $\rho$ centered at the source, and the left-side sensor is located at the source, so $s=0$ and $\ym=\Phi(x^\star,y^\star)$.

\section{Control Designs and Main Results}
\label{sec:control}

This section presents two control designs with different sensing requirements. The first uses only the real-time scalar measurement $\ym(t)$ and combines a fast symmetric-product torque excitation with a slow Lie-bracket bias update, leading to a local practical stability result. The second additionally uses measurements of $v$ and $\omega$ to shape the yaw-rate tracking error directly, yielding a globally asymptotically stable averaged system and a semi-global practical stability result for the original system. Together, the two designs quantify the tradeoff between sensing capability and guaranteed stability region.

\subsection{Local Output-Feedback Design}

Let $u_1,u_2:\mathbb R_{\ge 0}\to\mathbb R$ be bounded, piecewise continuous, zero-mean, $T$-periodic functions. We assume that $u_1$ and $u_2$ are normalized as
\begin{equation}\label{eq:u_normalization}
    \frac{1}{T}\int_0^Tu_2(\sigma)\int_0^\sigma u_1(\xi){\rm d}\xi{\rm d}\sigma=\frac{1}{2}.
\end{equation}
Let $w:\mathbb R_{\ge 0}\to\mathbb R$ be a bounded, nonzero, piecewise continuous, zero-mean, $T_w$-periodic function. Let $W$ denote the $T_w$-periodic zero-mean antiderivative of $w$, i.e., $W'=w$. We assume that $w$ is scaled so that
\begin{equation}
\frac{1}{T_w}\int_0^{T_w}W^2(\sigma){\rm d}\sigma=\frac12 .
\label{eq:w_condition}
\end{equation}
Define the nominal steady-state yaw torque required to sustain the desired circular motion as
\begin{equation}
\mu^\star:=\frac{d_\omega v}{\rho}.
\label{eq:mustar}
\end{equation}
Let $\mathcal I:=[\mu_{\min},\mu_{\max}]$ be chosen, using the known bounds on $v$ and $d_\omega$, so that
\begin{equation}
0<\mu_{\min}<\frac{d_{\min}v_{\min}}{\rho}
\le \mu^\star
\le \frac{d_{\max}v_{\max}}{\rho}<\mu_{\max}.
\label{eq:bias_interval}
\end{equation}
Thus $\mu^\star\in\operatorname{Int}(\mathcal I)$, although $\mu^\star$ is unknown.

The scalar measurement $\ym$ is first passed through the washout filter
\begin{equation}
\dot z=-\lambda z+\lambda \ym,
\qquad
\yf=\ym-z,
\label{eq:filter}
\end{equation}
where $\lambda>0$. 
We propose the output-feedback control law
\begin{subequations}\label{eq:control_law}
\begin{align}
    \tau&=\mu + \frac{a}{\varepsilon}w\left(\frac{t}{\varepsilon}\right)H(\yf),\label{eq:controller_tau}\\
    \dot\mu&=
    \frac{b\Omega}{\sqrt{\delta}}
    \left[u_1\left(\frac{\Omega t}{\delta}\right)\sin(\ym)+u_2\left(\frac{\Omega t}{\delta}\right)\cos(\ym)\right],
    \label{eq:mu_update}
\end{align}
\end{subequations}
where $a>0$ and $b>0$ are control gains; $0<\varepsilon\ll1$, $0<\Omega\ll1$ 
determine the fast excitation and slow bias-tuning time scales, respectively;  
$\delta$ sets the averaging scale in the slow time $\Omega t$ and is selected as $\Omega\ll \delta \ll 1$; and $\mu$ is the slowly varying yaw-torque bias.
The shaping function $H:\mathbb R\to\mathbb R$ satisfies $H\in C^2(\mathbb R)$ and
\begin{equation}
\Gamma(q):=H(q)H'(q)>0,\quad \Gamma'(q)>0, \quad \forall q\in\mathbb R .
\label{eq:H_condition}
\end{equation}
The control law \eqref{eq:filter}--\eqref{eq:control_law} is output feedback: it uses the scalar measurement $\ym$ only, together with the internally generated filter and bias state.

For analysis, define
\begin{equation}
r:=v-\rho\omega.
\label{eq:r_def}
\end{equation}
Then, the closed-loop system can be written in the analysis coordinates $(x_e,y_e,r,z,\mu)$ as
\begin{subequations}
\label{eq:closed_loop_original}
\begin{align}
\dot x_e&=
r+\omega(y_e+\rho),
\label{eq:closed_loop_original_xe}\\
\dot y_e&=
-\omega x_e,
\label{eq:closed_loop_original_ye}\\
\dot r&=
-\frac{d_\omega}{J}r
-\frac{\rho}{J}
\left[
\mu-\mu^\star
+\frac{a}{\varepsilon}w\left(\frac{t}{\varepsilon}\right)H(\yf)
\right],
\label{eq:closed_loop_original_r}\\
\dot z&=
-\lambda z+\lambda \psi(s),
\label{eq:closed_loop_original_z}\\
\dot\mu&=\frac{b\Omega}{\sqrt{\delta}}
    \left[u_1\left(\frac{\Omega t}{\delta}\right)\sin(\ym)+u_2\left(\frac{\Omega t}{\delta}\right)\cos(\ym)\right],
\label{eq:closed_loop_original_mu}
\end{align}
\end{subequations}
where 
$\omega=(v-r)/\rho$, $s=x_e^2+(y_e+\rho)^2$, and $\yf=\psi(s)-z$. 

The closed-loop system \eqref{eq:closed_loop_original} contains three separated time scales. 
The oscillatory torque term $w(t/\varepsilon)$ evolves on the fast time scale $O(\varepsilon)$. 
The plant and filter variables $x_e,y_e,r,z$ evolve on the normal time scale $O(1)$. 
The bias state $\mu$ evolves on a much slower time scale induced by the small parameter $\Omega$. 
The above time-scale separation motivates the two-step averaging analysis used below. 
In the fast averaging calculations, the bias state $\mu$ is treated as \textit{frozen} because its variation over the $O(\varepsilon)$ excitation period is negligible. 
After this fast averaging step, the plant/filter variables $x_e,y_e,r,z$ evolve on the normal time scale, while $\mu$ varies on the slower bias-adaptation time scale. 
Therefore, in the slow bias averaging, the fixed-bias averaged dynamics induce a \textit{steady-state input-output map} from $\mu$ to the corresponding steady scalar measurement. 
The slow averaging is then applied to this steady map.

\subsubsection{Fast Averaging via Symmetric-Product Approximation}

In the first step of the analysis, we average the closed-loop system \eqref{eq:closed_loop_original} on the fast time scale using the \textit{symmetric-product approximation} for the yaw-torque channel \cite{bullo2002averaging}. During this fast averaging calculation, the controller states $z$ and $\mu$ are treated as frozen parameters.

The relevant mechanical channel for the fast averaging is the one-dimensional yaw dynamics
\begin{equation}
\nabla_{\dot\theta}\dot\theta
=Y_0(\dot{\theta})
+
\frac{a}{\varepsilon}
w\left(\frac{t}{\varepsilon}\right)
Y_1(\theta),
\label{eq:yaw_mechanical_channel}
\end{equation}
where 
\begin{equation*}
Y_0(\dot{\theta}):=\left(
-\frac{d_\omega}{J}\dot\theta+\frac{\mu}{J}
\right)\frac{\partial}{\partial\theta},\quad
Y_1(\theta)
:=
\frac{1}{J}H(\psi(s)-z)\frac{\partial}{\partial\theta},
\end{equation*}
and $\nabla$ is the trivial affine connection on $\mathbb{R}$. Strictly speaking, $Y_1$ also depends on the frozen parameters $x$, $y$,
and $z$ through $s=x_e^2+(y_e+\rho)^2$; this dependence is suppressed for
notational simplicity, while the dependence of $s$ on
$\theta$ is retained in the symmetric-product calculation.

Consider the change of variables
\begin{equation}\label{eq:change-of-variable-yaw}
    \tilde{\omega}=\omega-\frac{a}{J}W\left(\frac{t}{\varepsilon}\right)H(\psi(s)-z).
\end{equation}
For fixed values of the frozen parameters, following \cite{bullo2002averaging}, the solutions of \eqref{eq:yaw_mechanical_channel} in the variables \eqref{eq:change-of-variable-yaw} approximate the solutions of
\begin{equation}
\nabla_{\dot{\bar{\theta}}}\dot{\bar{\theta}}
=Y_0(\dot{\bar{\theta}})
-\frac{a^2}{4}
\langle Y_1:Y_1\rangle(\bar\theta),
\label{eq:yaw_mechanical_averaged}
\end{equation}
where $\langle Y_1:Y_1\rangle$ is the symmetric product given by $\langle Y_1:Y_1\rangle:=2\nabla_{Y_1} {Y_1}$.
Since the connection is trivial, direct calculation yields
\begin{equation}
    \nabla_{Y_1} {Y_1}(\theta) =
\frac{1}{J^2}
H(\yf)H'(\yf)
\frac{\partial\yf}{\partial\theta}
\frac{\partial}{\partial\theta}.
\end{equation}
Hence, using $\partial\yf/\partial\theta=-2\rho\psi'(s)x_e$, \eqref{eq:yaw_mechanical_averaged} becomes
\begin{equation}
    J\dot{\bar{\omega}}
=-d_\omega \bar{\omega}+\mu+
\frac{a^2\rho}{J}
\Gamma(\baryf)\psi'(\bar s)\bar{x}_e,
\end{equation}
where $\bar s:=\bar{x}_e^2+(\bar{y}_e+\rho)^2$ and $\baryf:=\psi(\bar s)-\bar{z}$.

With the definition $\bar r:=v-\rho\bar\omega$, one has
$\dot{\bar r}=-\rho\dot{\bar\omega}$. Hence, for each fixed bias
$\mu$, the associated fixed-bias fast-averaged system is
\begin{subequations}
\label{eq:fast_averaged}
\begin{align}
\dot{\bar x}_e&=
\bar r+\bar\omega(\bar y_e+\rho),
\label{eq:fast_averaged_xe}\\
\dot{\bar y}_e&=
-\bar\omega\bar x_e,
\label{eq:fast_averaged_ye}\\
\dot{\bar r}&=
-\frac{d_\omega}{J}\bar r
-\frac{\rho}{J}(\mu-\mu^\star)
-\frac{a^2\rho^2}{J^2}
\Gamma(\baryf)\psi'(\bar s)\bar x_e,
\label{eq:fast_averaged_r}\\
\dot{\bar z}&=-\lambda\bar z+\lambda\psi(\bar s),\label{eq:fast_averaged_z}
\end{align}
\end{subequations}
where $\bar{\omega}=(v-\bar{r})/\rho$. For each fixed $\mu\in\operatorname{Int}(\mathcal{I})$, the fast-averaged system \eqref{eq:fast_averaged} has the unique equilibrium
\begin{equation}\label{eq:equilibrium}
    \bar x_{e*}=0,\;
    \bar y_{e*} = -\frac{\rho\mu^\star}{\mu},\;
    \bar r_*=-\frac{\rho}{d_\omega}(\mu-\mu^\star),\;
    \bar z_*=\psi(\bar s_*),
\end{equation}
where 
\begin{equation*}
\bar s_*(\mu)=
\bar x_{e*}^2+\left(\bar y_{e*}+\rho\right)^2 
=\rho^2\left(1-\frac{\mu^\star}{\mu}\right)^2.
\end{equation*}

\begin{proposition}
\label{prop:fixed_bias_les}
Suppose that Assumption~1 holds. Then, for each fixed
$\mu\in\operatorname{Int}(\mathcal I)$, the equilibrium
\eqref{eq:equilibrium} of \eqref{eq:fast_averaged} is locally
exponentially stable.
\end{proposition}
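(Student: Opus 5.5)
The plan is to linearize \eqref{eq:fast_averaged} at the equilibrium \eqref{eq:equilibrium} and show that the Jacobian is Hurwitz. Local exponential stability then follows from Lyapunov's indirect method, since the right-hand side is $C^1$ by Assumption~\ref{ass:source_field} and $H\in C^2$.

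First I will record the equilibrium quantities that enter the Jacobian. From \eqref{eq:equilibrium} and \eqref{eq:mustar},
\begin{equation*}
\bar\omega_*=\frac{v-\bar r_*}{\rho}=\frac{\mu^\star}{d_\omega}+\frac{\mu-\mu^\star}{d_\omega}=\frac{\mu}{d_\omega}>0
\end{equation*}
for $\mu\in\operatorname{Int}(\mathcal I)$. In addition, $\bar y_{e*}+\rho=\rho(1-\mu^\star/\mu)$ and $\baryf_*=\psi(\bar s_*)-\bar z_*=0$. Writing deviations with $\delta$ and using $\delta\bar\omega=-\delta\bar r/\rho$, the $\bar x_e$-equation linearizes to
\begin{equation*}
\delta\dot{\bar x}_e=\left(1-\frac{\bar y_{e*}+\rho}{\rho}\right)\delta\bar r+\bar\omega_*\delta\bar y_e=c\,\delta\bar r+\bar\omega_*\delta\bar y_e,
\qquad c:=\frac{\mu^\star}{\mu}>0 .
\end{equation*}
The $\bar y_e$-equation gives $\delta\dot{\bar y}_e=-\bar\omega_*\delta\bar x_e$, because $\bar x_{e*}=0$.

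In the $\bar r$-equation, the nonlinear term $\Gamma(\baryf)\psi'(\bar s)\bar x_e$ carries the factor $\bar x_e$, which vanishes at equilibrium. So only its derivative with respect to $\bar x_e$ survives, and
\begin{equation*}
\delta\dot{\bar r}=-\alpha\,\delta\bar r-k\,\delta\bar x_e,
\qquad \alpha:=\frac{d_\omega}{J},\quad k:=\frac{a^2\rho^2}{J^2}\Gamma(0)\psi'(\bar s_*)>0,
\end{equation*}
where $k>0$ by \eqref{eq:H_condition} and Assumption~\ref{ass:source_field}. For the same reason, $\delta\bar z$ does not enter the $\bar r$-equation, nor the $\bar x_e$- or $\bar y_e$-equations. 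The Jacobian is therefore block lower-triangular. One block is the $(\bar x_e,\bar y_e,\bar r)$ subsystem; the other is the $\bar z$-equation, whose diagonal entry is $-\lambda<0$ (it is driven by $\delta\bar y_e$, but that only affects off-diagonal terms).

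It then remains to check the $3\times3$ block
\begin{equation*}
A=\begin{bmatrix}0&\bar\omega_*&c\\-\bar\omega_*&0&0\\-k&0&-\alpha\end{bmatrix}.
\end{equation*}
Its characteristic polynomial is
\begin{equation*}
\det(sI-A)=s^3+\alpha s^2+(\bar\omega_*^2+ck)s+\alpha\bar\omega_*^2 .
\end{equation*}
All coefficients are positive. The Routh--Hurwitz condition for a cubic requires the product of the $s^2$ and $s$ coefficients to exceed the constant term, i.e. $\alpha(\bar\omega_*^2+ck)>\alpha\bar\omega_*^2$. This reduces to $ck>0$, which holds. Hence $A$ is Hurwitz, so the full Jacobian is Hurwitz, and the equilibrium is locally exponentially stable.

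The main obstacle is bookkeeping rather than analysis. I need to carefully verify the sign of the coefficient $c$, which relies on $\mu>0$ and $\mu^\star>0$. I also need to confirm that the $\bar z$ and $\bar y_e$ dependence inside $\Gamma(\baryf)\psi'(\bar s)$ drops out at $\bar x_{e*}=0$, since this is what produces the triangular structure.
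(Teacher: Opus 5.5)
Your proposal is correct and is the same argument as the paper's. The paper linearizes, observes that $\delta\bar z$ does not feed back so the characteristic polynomial factors as $(\zeta+\lambda)$ times the same cubic you obtain (with $c_0=\mu^2/(Jd_\omega)$ and $c_1=\mu^2/d_\omega^2+\frac{\mu^\star}{\mu}\frac{a^2\rho^2}{J^2}\Gamma(0)\psi'(\bar s_*)$), and applies Routh--Hurwitz, where $c_2c_1-c_0>0$ reduces to your $ck>0$. One small point: your symbols $s$, $k$, and $\delta$ collide with the paper's squared distance, velocity-assisted gain, and averaging parameter, so rename them if you write this up.
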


\begin{proof}
Fix $\mu\in\operatorname{Int}(\mathcal I)$ and introduce the local
linearization coordinates $\chi:=
\operatorname{col}\left(
\bar x_e-\bar x_{e*},
\bar y_e-\bar y_{e*},
\bar r-\bar r_*,
\bar z-\bar z_*
\right)$.
At the equilibrium \eqref{eq:equilibrium}, one has $\bar\omega_*={\mu}/{d_\omega}$ and $\bar{\mathsf{y}}_{\mathsf{f}*}=\psi(\bar s_*)-\bar z_*=0$.
Linearizing \eqref{eq:fast_averaged} about the equilibrium \eqref{eq:equilibrium} gives $\dot\chi=A_\mu\chi$, where 
\begin{equation*}
A_\mu=
\begin{bmatrix}
0 & \dfrac{\mu}{d_\omega} & \dfrac{\mu^\star}{\mu} & 0 \\[1mm]
-\dfrac{\mu}{d_\omega} & 0 & 0 & 0 \\[1mm]
-\dfrac{a^2\rho^2}{J^2}\Gamma(0)\psi'(\bar s_*) & 0
& -\dfrac{d_\omega}{J} & 0 \\[1mm]
0 & 2\lambda\psi'(\bar s_*)(\bar y_{e*}+\rho) & 0 & -\lambda
\end{bmatrix}.
\end{equation*}
The characteristic polynomial of $A_\mu$ is
\begin{equation}
\det(\zeta I-A_\mu)
=
(\zeta+\lambda)
\left[
\zeta^3+c_2\zeta^2+c_1\zeta+c_0
\right],
\end{equation}
where $c_2:={d_\omega}/{J}>0$, $c_0:= \mu^2/(Jd_\omega)>0$, and
\begin{equation*}
    c_1:=\left(\frac{\mu}{d_\omega}\right)^2+\frac{\mu^\star}{\mu}\frac{a^2\rho^2}{J^2}\Gamma(0)\psi'(\bar s_*)>0.
\end{equation*}
Moreover, $c_2c_1-c_0=d_\omega \mu^\star a^2\rho^2\Gamma(0)\psi'(\bar s_*)/(\mu J^3)>0$.
Hence, by the Routh--Hurwitz criterion for cubic polynomials, all roots
of $\zeta^3+c_2\zeta^2+c_1\zeta+c_0$ have negative real parts. The
remaining eigenvalue is $-\lambda<0$. Therefore $A_\mu$ is Hurwitz.
The linearization theorem then implies that the equilibrium
\eqref{eq:equilibrium} of \eqref{eq:fast_averaged} is locally
exponentially stable.\qed
\end{proof}

\subsubsection{Slow Bias Averaging via Lie-Bracket Approximation}

Next, we derive the slow-averaged dynamics of $\mu$. 
Locally around the exponentially stable equilibrium branch, \eqref{eq:fast_averaged} defines a steady input-output map $\mu\mapsto P(\mu)$ from the yaw-torque bias $\mu$ to the corresponding steady scalar measurement. 
The slow update \eqref{eq:mu_update} is designed as a \textit{Lie-bracket extremum-seeking} law on the slow time $\sigma:=\Omega t$ \cite{DurrStankovicEbenbauerJohansson2013,durr2017extremum}. As shown below, the resulting averaged bias dynamics are a gradient descent for $P$.

Define the steady input-output map by
\begin{equation}\label{eq:P_def}
P:\mathcal{I}\to\mathbb{R},\qquad P(\mu):=\psi\left(\bar s_*(\mu)\right).
\end{equation}
Differentiating $P$ gives
\begin{equation}
P'(\mu)=p(\mu)(\mu-\mu^\star),\quad p(\mu):=\frac{2\rho^2\mu^\star}{\mu^3}\psi'(\bar s_*(\mu)).
\label{eq:P_derivative}
\end{equation}
Since $\mathcal I\subset(0,\infty)$ is compact, $\mu^\star>0$, and $\psi'(s)>0$, one has 
\begin{equation}
    \underline p:=\min_{\mu\in \mathcal{I}} p(\mu)>0,
\end{equation}
and
\begin{equation}
(\mu-\mu^\star)P'(\mu)
\ge
\underline p(\mu-\mu^\star)^2,
\quad
\forall \mu\in\mathcal I.
\label{eq:P_monotonicity}
\end{equation}
Thus, $\mu^\star$ is the unique minimizer of $P$ on $\mathcal I$, and the descent direction $-P'(\mu)$ points toward $\mu^\star$.

Now we derive the slow-averaged bias dynamics. Let
\begin{equation}
\sigma:=\Omega t.
\end{equation}
Restricting the slow bias update \eqref{eq:mu_update} to the equilibrium family gives the reduced system
\begin{equation}
    \frac{{\rm d}\mu}{{\rm d}\sigma}=\frac{b}{\sqrt{\delta}}\left[u_1\left(\frac{\sigma}{\delta} \right)\sin(P(\mu))+u_2 \left(\frac{\sigma}{\delta} \right)\cos(P(\mu)) \right],
\end{equation}
where, along this equilibrium family, the steady scalar measurement is ${\bar{\mathsf{y}}_{\mathsf{m}*}}=P(\mu)$.
Using the normalization \eqref{eq:u_normalization}, following \cite{DurrStankovicEbenbauerJohansson2013}, the corresponding Lie-bracket averaged dynamics are
\begin{equation}
    \frac{{\rm d}\bar{\mu}}{{\rm d}\sigma}
    =\frac{b^2}{2}[g_1,g_2](\bar{\mu})
    =-\frac{b^2}{2}P'(\bar\mu),
\end{equation}
where $g_1(\mu):=\sin(P(\mu))$ and $g_2(\mu):=\cos(P(\mu))$.
Equivalently, in the original time variable,
\begin{equation}\label{eq:bar_mu}
    \dot{\bar{\mu}}=-\frac{b^2\Omega}{2}P'(\bar\mu).
\end{equation}
The averaged bias dynamics \eqref{eq:bar_mu} are a gradient descent for $P$. From \eqref{eq:P_monotonicity} and \eqref{eq:bar_mu}, the Lyapunov function
\begin{equation}\label{eq:V_mu}
    V_\mu:=\frac{1}{2}(\bar\mu-\mu^\star)^2
\end{equation}
satisfies
\begin{equation}\label{eq:dot_V_mu}
    \dot V_\mu=-\frac{b^2\Omega}{2}(\bar\mu-\mu^\star)P'(\bar\mu)
    \le-\frac{b^2\Omega}{2}\underline p(\bar\mu-\mu^\star)^2,
\end{equation}
which implies that the reduced slow-averaged bias dynamics drive $\bar\mu$ to $\mu^\star$.


\subsubsection{Stability Analysis}

For notation simplicity, denote $\xi:=(\bar x_e,\bar y_e,\bar r,\bar z)$.
Let $f_\xi(\xi,\mu)$ denote the right-hand side of \eqref{eq:fast_averaged}. 
Then, combining the fast averaged dynamics \eqref{eq:fast_averaged} with the slow averaged bias dynamics \eqref{eq:bar_mu} yields
\begin{subequations}
\label{eq:averaged_interconnection}
\begin{align}
\dot{\xi}&=f_\xi(\xi,\bar{\mu}),\\
\dot{\bar\mu}&=
- \Omega k_{\mu} P'(\bar\mu),
\label{eq:averaged_interconnection_mu}
\end{align}
\end{subequations}
where $k_{\mu}:={b^2}/{2}$. The equilibrium of \eqref{eq:averaged_interconnection} is given by
\begin{equation}
(\xi^\star,\mu^\star):=(0,-\rho,0,\psi(0),\mu^\star).
\label{eq:averaged_interconnection_equilibrium}
\end{equation}

\begin{proposition}
\label{prop:local_composite_average}
Suppose that Assumption~\ref{ass:source_field} holds.
Then there exists $\Omega_0>0$ such that, for every
$\Omega\in(0,\Omega_0]$, the equilibrium $(\xi^\star,\mu^\star)$ of
\eqref{eq:averaged_interconnection} is locally asymptotically stable.
\end{proposition}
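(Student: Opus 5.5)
Since \eqref{eq:averaged_interconnection} is a standard singular-perturbation structure with slow variable $\bar\mu$ (time scale $\Omega t$) and fast variable $\xi$, we use a linearization argument with a two-time-scale determinant/eigenvalue estimate, which is the most direct route given Proposition~\ref{prop:fixed_bias_les}. Note first that $(\xi^\star,\mu^\star)$ is indeed an equilibrium: at $\mu=\mu^\star$ the formulas \eqref{eq:equilibrium} give $\bar x_{e*}=0$, $\bar y_{e*}=-\rho$, $\bar r_*=0$, $\bar z_*=\psi(0)$, and $P'(\mu^\star)=0$ by \eqref{eq:P_derivative}. Because the $\bar\mu$-equation does not depend on $\xi$, the Jacobian of \eqref{eq:averaged_interconnection} at $(\xi^\star,\mu^\star)$ is block upper-triangular:
\begin{equation*}
\mathcal A_\Omega=\begin{bmatrix} A_{\mu^\star} & B \\ 0 & -\Omega k_\mu P''(\mu^\star)\end{bmatrix},\qquad B:=\frac{\partial f_\xi}{\partial \mu}(\xi^\star,\mu^\star).
\end{equation*}

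The key steps in order are as follows. (1) By Proposition~\ref{prop:fixed_bias_les}, $A_{\mu^\star}$ is Hurwitz, since $\mu^\star\in\operatorname{Int}(\mathcal I)$. (2) From \eqref{eq:P_derivative}, $P''(\mu^\star)=p(\mu^\star)+p'(\mu^\star)(\mu^\star-\mu^\star)=p(\mu^\star)=2\rho^2\psi'(0)/(\mu^{\star})^2>0$; here $p$ is $C^1$ because $\psi\in C^2$. Hence the scalar block $-\Omega k_\mu P''(\mu^\star)$ is strictly negative for every $\Omega>0$. (3) The spectrum of the block-triangular $\mathcal A_\Omega$ is the union of the spectra of the diagonal blocks, so $\mathcal A_\Omega$ is Hurwitz for every $\Omega>0$. (4) The right-hand side of \eqref{eq:averaged_interconnection} is $C^1$ near the equilibrium: $f_\xi$ involves $\Gamma=HH'$ with $H\in C^2$, and $\psi,\psi'$ with $\psi\in C^2$, while $P'$ is $C^1$. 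Lyapunov's indirect method therefore gives local exponential, hence asymptotic, stability. Any $\Omega_0>0$ works, which is consistent with (indeed stronger than) the existential claim.

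The main obstacle is regularity rather than spectrum. Linearization requires $P'$ to be differentiable at $\mu^\star$. This holds because $\bar s_*(\mu)$ is smooth and $\psi'$ is $C^1$, but $s\mapsto\psi$ is defined only on $s\ge0$. Since $\bar s_*\ge0$ always, one-sided differentiability at $s=0$ suffices, and I would check this carefully. The same applies to $\xi$ near $\xi^\star$: there $\bar s\ge 0$ automatically, and $\psi$ is $C^2$ on $\mathbb R_{\ge0}$ (extend it $C^2$ to $s<0$ if needed).

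If one instead wants a proof robust to the nonlinear, non-triangular structure of the true reduced model, the fallback is a composite Lyapunov function $V=\xi_e^\top S\xi_e+c\,V_\mu$. Here $\xi_e$ is the deviation from the equilibrium branch $\xi_*(\bar\mu)$, $S$ solves the Lyapunov equation for $A_\mu$, and $V_\mu$ is from \eqref{eq:V_mu}. The cross term $\partial\xi_*/\partial\mu\cdot\dot{\bar\mu}=O(\Omega)$ is then dominated for $\Omega\le\Omega_0$, which is where the smallness of $\Omega$ would enter.
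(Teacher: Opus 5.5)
Your proof is correct, but it takes a different route from the paper.

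\textbf{The paper's route.} The paper rewrites \eqref{eq:averaged_interconnection} in the slow time $\sigma=\Omega t$ as a standard singularly perturbed system. It then argues in three steps:
\begin{enumerate}[1)]
\item Proposition~\ref{prop:fixed_bias_les}, together with continuity of the linearization in $\mu$, gives exponential stability of the boundary layer, uniformly on a compact neighborhood of $\mu^\star$.
\item \eqref{eq:V_mu}--\eqref{eq:dot_V_mu} give exponential stability of the reduced $\bar\mu$-dynamics.
\item Tikhonov's theorem on the infinite interval \cite[Theorem 11.4]{Khalil:1173048} yields the conclusion.
\end{enumerate}
The smallness condition $\Omega\le\Omega_0$ enters only through step 3.

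\textbf{Your route.} You use the fact that, as written, \eqref{eq:averaged_interconnection} is a cascade. The $\bar\mu$-equation is driven by the steady map $P'(\bar\mu)$, not by $\xi$. So the Jacobian at $(\xi^\star,\mu^\star)$ is block triangular, with diagonal blocks $A_{\mu^\star}$ (Hurwitz by Proposition~\ref{prop:fixed_bias_les}) and $-\Omega k_\mu P''(\mu^\star)=-\Omega k_\mu p(\mu^\star)<0$.

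\textbf{What each buys.} Your argument is more elementary. It needs only $C^1$ regularity of the vector field, which you verify correctly, including the harmless boundary issue at $s=0$. It also gives a stronger conclusion: local exponential stability for every $\Omega>0$. The existential $\Omega_0$ is therefore not needed at the level of the averaged model.

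The paper's singular-perturbation route is robust to feedback from $\xi$ into the slow equation. For example, the slow update might be modeled with the actual output $\psi(\bar s)$ rather than $P(\bar\mu)$. Then your triangular structure disappears, and one needs the composite-Lyapunov argument sketched in your last paragraph. That argument is essentially the proof of the theorem the paper cites.
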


\begin{proof}
In the slow time $\sigma:=\Omega t$, \eqref{eq:averaged_interconnection}
can be written in the standard singularly perturbed form \cite{Khalil:1173048}.
By Proposition~\ref{prop:fixed_bias_les}, for each fixed
$\mu\in\operatorname{Int}(\mathcal I)$, the equilibrium
$\xi_*(\mu)=(\bar{x}_{e*},\bar{y}_{e*},\bar{r}_{*},\bar{z}_{*})$ of the boundary-layer system is locally exponentially stable. Since the linearization depends continuously on $\mu$, this local exponential stability is uniform for $\mu$ in a sufficiently small compact neighborhood of $\mu^\star$ contained in $\operatorname{Int}(\mathcal{I})$.
The reduced system \eqref{eq:averaged_interconnection_mu} is obtained by setting $\xi=\xi_*(\mu)$. Following \eqref{eq:V_mu}--\eqref{eq:dot_V_mu}, $\mu^\star$ is a locally exponentially stable equilibrium of \eqref{eq:averaged_interconnection_mu}.
The conclusion follows from the standard Tikhonov singular perturbation
theorem on the infinite time interval \cite[Theorem 11.4]{Khalil:1173048}. \qed
\end{proof}

Following \eqref{eq:change-of-variable-yaw}, define the change of variables
\begin{subequations}\label{eq:change-of-variable}
   \begin{align}
    \tilde{x}_e&=x_e,\quad \tilde{y}_e=y_e,\quad \tilde{z}=z,\quad \tilde{\mu}=\mu\\
    \tilde{r}&=r+\frac{\rho a}{J}W\left(\frac{t}{\varepsilon}\right)H(\psi(s)-z).
   \end{align}
\end{subequations}
By the symmetric-product approximation \cite{bullo2002averaging} and the Lie-bracket averaging theorem \cite{DurrStankovicEbenbauerJohansson2013,durr2017extremum}, for sufficiently small $\varepsilon$, $\delta$, and $\Omega/\delta$, the solutions of \eqref{eq:closed_loop_original} in the variables of \eqref{eq:change-of-variable} approximate the solutions of \eqref{eq:averaged_interconnection} on compact time intervals.
In the terminology of Definition \ref{def:1}, we state the main result as a consequence of Propositions \ref{prop:local_composite_average} and Lemma \ref{lem:1}(i).

\begin{theorem}
\label{thm:main_practical_interconnection}
Suppose Assumption~\ref{ass:source_field} holds. Choose the shaping function $H$ such that \eqref{eq:H_condition} is satisfied. 
Then, there exist constants
$\Omega_0>0$, $\delta_0>0$, $\gamma_0>0$, and $\varepsilon_0>0$ such that, for every
$\Omega\in(0,\Omega_0]$, every $\delta\in(0,\delta_0]$ satisfying
$\Omega/\delta\le \gamma_0$, and every $\varepsilon\in(0,\varepsilon_0]$, the equilibrium $(\xi^\star,\mu^\star)$ is locally practically uniformly asymptotically stable for \eqref{eq:closed_loop_original} in the variables of \eqref{eq:change-of-variable}.
\end{theorem}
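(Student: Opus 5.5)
The plan is to reduce the theorem to Lemma~\ref{lem:1}(i) applied to a cascade of two approximation steps. The target system is the averaged interconnection \eqref{eq:averaged_interconnection}. By Proposition~\ref{prop:local_composite_average}, $(\xi^\star,\mu^\star)$ is locally asymptotically stable for it once $\Omega\in(0,\Omega_0]$.

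It therefore suffices to show that, for fixed such $\Omega$, the solutions of \eqref{eq:closed_loop_original} in the variables \eqref{eq:change-of-variable} approximate the solutions of \eqref{eq:averaged_interconnection}. This must hold in the sense of the approximation definition of Section~\ref{sec:practical_stability}: uniformly on compact sets and compact time intervals, as the small parameters tend to zero. I will build this approximation through an intermediate system and then use a triangle inequality.

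The intermediate system $\Sigma_1$ keeps the original slow update \eqref{eq:closed_loop_original_mu}. Its plant/filter part, however, is replaced by the fast-averaged vector field \eqref{eq:fast_averaged} with $\mu$ as a state.

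\textbf{Step 1: fast averaging, original system to $\Sigma_1$.} Here I treat \eqref{eq:closed_loop_original} as a mechanical system with one large-amplitude input $\frac{a}{\varepsilon}w(t/\varepsilon)Y_1$. The remaining states are carried as drift, and $\mu,z$ enter the drift with bounded derivatives, namely $O(\Omega/\sqrt\delta)$ and $O(1)$.

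Applying the symmetric-product averaging theorem of \cite{bullo2002averaging} in the coordinates \eqref{eq:change-of-variable}, I expect an $O(\varepsilon)$ closeness on $[t_0,t_0+T]$ for initial data in a compact set. This uses $W$ zero-mean and the identity $\frac{1}{T_w}\int W^2=\tfrac12$, which produces exactly the coefficient $-\tfrac{a^2}{4}\langle Y_1:Y_1\rangle$. The point to check is that the slow excitation in $\dot\mu$ does not destroy this estimate. Since $\dot\mu$ is bounded by $b\Omega\|u\|_\infty/\sqrt\delta$, which is fixed once $\Omega,\delta$ are fixed, the $\varepsilon\to0$ limit can be taken last. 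This is consistent with choosing $\varepsilon_0$ after $\Omega,\delta$.

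\textbf{Step 2: Lie-bracket averaging, $\Sigma_1$ to \eqref{eq:averaged_interconnection}.} In $\Sigma_1$ only $\mu$ carries the oscillation $u_i(\Omega t/\delta)$. I will rewrite $\Sigma_1$ in slow time $\sigma=\Omega t$. The drift is then $\Omega^{-1}f_\xi$ for $\xi$, and the input vector fields are $g_1=\sin(\psi(s)-\ldots)$ and $g_2$, which now depend on $\xi$ through $\ym=\psi(\bar s)$.

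Lie-bracket averaging \cite{DurrStankovicEbenbauerJohansson2013,durr2017extremum} then gives an averaged $\mu$-equation $\dot{\bar\mu}=-k_\mu\Omega\,\partial_\mu[\ldots]$. The bracket of $g_1,g_2$ with respect to $\mu$ vanishes along this dependence unless $\ym$ depends on $\mu$ instantaneously. The reduction to $P'(\bar\mu)$ therefore requires combining the averaging with the singular-perturbation structure: on the time scale $\delta/\Omega$ the boundary layer $\xi$ tracks $\xi_*(\mu)$ exponentially, by Proposition~\ref{prop:fixed_bias_les}. This tracking is why the condition $\Omega/\delta\le\gamma_0$ appears.

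\textbf{Main obstacle.} I expect Step 2 to be the main obstacle, namely justifying that the averaged bias dynamics see the steady map $P$ rather than the instantaneous output. My first attempt is to apply the Lie-bracket approximation to the reduced system $\mu'=\frac{b}{\sqrt\delta}[u_1\sin P(\mu)+u_2\cos P(\mu)]$. I would then bound the error from replacing $\psi(\bar s)$ by $P(\mu)$ using the uniform exponential stability of the boundary layer on a compact neighborhood of $\mu^\star$. The argument that $\mu$ moves only $O(\sqrt\delta)$ per period while $\xi$ relaxes in $O(1)$ time is what requires $\Omega/\delta$ small. Summing the errors of Steps 1 and 2 gives the required approximation. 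Lemma~\ref{lem:1}(i) then yields local practical uniform asymptotic stability, with the constants chosen in the order $\Omega_0$, then $\delta_0,\gamma_0$, then $\varepsilon_0$.
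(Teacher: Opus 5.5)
Your skeleton is the paper's own. The paper asserts in one sentence, citing the symmetric-product and Lie-bracket averaging theorems, that for small $\varepsilon$, $\delta$, $\Omega/\delta$ the solutions of \eqref{eq:closed_loop_original} in the variables \eqref{eq:change-of-variable} approximate those of \eqref{eq:averaged_interconnection}, and then combines Proposition~\ref{prop:local_composite_average} with Lemma~\ref{lem:1}(i). Your Step~1 is correct. So is your diagnosis in Step~2: the full-state bracket of $\sin\ym\,\partial_\mu$ and $\cos\ym\,\partial_\mu$ vanishes, so $P$ can enter only through boundary-layer tracking. Both go beyond that sentence.

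\textbf{The gap.} The reduction you assemble these into fails: for \emph{fixed} $\Omega$, the solutions of \eqref{eq:closed_loop_original} do not approximate those of \eqref{eq:averaged_interconnection}.
\begin{itemize}
\item With $\Omega$ fixed, $\Omega/\delta\le\gamma_0$ (let alone $\Omega/\delta\to0$, which your tracking argument needs) keeps $\delta\ge\Omega/\gamma_0$. The $O(\sqrt\delta)$ dither oscillation of $\mu$ therefore never vanishes.
\item If you send $\delta\to0$ anyway, write $\frac{\Omega}{\sqrt\delta}u_i(\Omega t/\delta)=\sqrt\delta\,\frac{\mathrm d}{\mathrm dt}U_i(\Omega t/\delta)$, with $U_i$ the periodic antiderivative of $u_i$, and integrate \eqref{eq:closed_loop_original_mu} by parts. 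Since $\ym=\psi(s)$ and $\dot s=2x_er$ is bounded on compact sets, you get $|\mu(t)-\mu(t_0)|\le C\sqrt\delta\,(1+t-t_0)$. So $\mu$ freezes on compact $t$-intervals. The solution of \eqref{eq:averaged_interconnection_mu}, by contrast, moves by roughly $\Omega k_\mu|P'(\mu_0)|T\neq0$ when $\mu_0\neq\mu^\star$. This is exactly the dither-faster-than-plant regime in which your own bracket computation returns zero.
\item Conversely, $\delta\to0$ together with $\Omega/\delta\to0$ forces $\Omega\to0$. Then \eqref{eq:averaged_interconnection} is no longer a fixed target: on compact $t$-intervals it collapses to $\dot{\bar\mu}=0$, which is not asymptotically stable, so Lemma~\ref{lem:1}(i) gives nothing.
\item Also, \eqref{eq:averaged_interconnection} is a cascade (its $\bar\mu$-equation does not involve $\xi$), hence locally exponentially stable for \emph{every} $\Omega>0$. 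Smallness of $\Omega$ is needed for the approximation, not for the target, so fixing $\Omega_0$ first from Proposition~\ref{prop:local_composite_average} is the wrong order.
\end{itemize}

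\textbf{What is needed.} Since $\mu$ converges over $t$-intervals of length $O(1/\Omega)$, the argument has to be carried out in $\sigma=\Omega t$. There the fast-averaged system reads $\Omega\,\mathrm d\xi/\mathrm d\sigma=f_\xi(\xi,\mu)$, $\mathrm d\mu/\mathrm d\sigma=b\,\delta^{-1/2}[u_1(\sigma/\delta)\sin\ym+u_2(\sigma/\delta)\cos\ym]$, which is the dynamic-map structure your Step~2 already reaches. In $\sigma$-time, $\xi$ has an initial layer, so it is not uniformly approximated on compact $\sigma$-intervals by a fixed ODE in the sense of the approximation definition of Section~\ref{sec:practical_stability}. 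The missing ingredient is a two-time-scale practical-stability result, in one of two forms:
\begin{itemize}
\item A combined Lie-bracket/singular-perturbation theorem of the type in \cite{durr2017extremum}. Apply it to the reduced system $\mathrm d\mu/\mathrm d\sigma=b\,\delta^{-1/2}[u_1\sin P(\mu)+u_2\cos P(\mu)]$, whose Lie-bracket system $\mathrm d\bar\mu/\mathrm d\sigma=-k_\mu P'(\bar\mu)$ is locally exponentially stable. Control the boundary layer by the uniform exponential stability of $\xi_*(\mu)$ from Proposition~\ref{prop:fixed_bias_les}.
\item Your triangle-inequality scheme recast in $\sigma$-time. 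This requires $\mathcal{KL}$ bounds for the $\Omega$-dependent targets and approximation estimates that are both uniform in $\Omega$, with the $\xi$-error controlled by local contraction of the boundary layer. It also requires the corresponding uniform version of Lemma~\ref{lem:1}(i).
\end{itemize}
In both cases, first choose $\delta$ and $\Omega/\delta$ small depending on the desired accuracy. Then choose $\varepsilon$ small depending also on $\delta$ and $\Omega$, because the symmetric-product step must hold over $t$-horizons of length $O(1/\Omega)$. Following the paper does not close this gap, since its proof only asserts the approximation by citation.
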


\begin{remark}
The output-feedback design is not a minor modification of the torque law in \cite{suttner2022source}.
In \cite{suttner2022source}, the torque consists only of a large-amplitude oscillatory term, and the limiting circular orbit is determined implicitly by the balance among the physical parameters, the design gains, and the averaged gradient-dependent torque. In the present design, the yaw torque contains an additional slowly varying bias. This bias is not only a constant offset; it is the quantity to be tuned so that the desired source-centered orbit is selected. The lateral sensor placement is also essential. Instead of placing the sensor in front of the vehicle, the sensor is placed on the side, so that the desired circular motion corresponds to the sensor being located at the source. Thus, the scalar measurement provides information for tuning the bias toward the source-centered orbit. With only the scalar measurement, this tuning is achieved indirectly through a Lie-bracket extremum-seeking update, which leads to a local practical stability result.
\end{remark}

\subsection{Semi-Global Velocity-Assisted Design}

The preceding design uses only the scalar measurement $\ym$. In many robotic platforms, however, the forward speed $v$ and yaw rate $\omega$ are readily available from wheel, motor, or inertial sensors. The following design uses this additional velocity information to tune the bias torque; it still does not require measurements of the vehicle position, heading, source location, field gradient, inertia, or damping coefficient.

We keep the washout filter \eqref{eq:filter} and the same fast torque excitation, but replace the Lie-bracket bias update with the yaw-rate tracking feedback
\begin{subequations}\label{eq:control2}
\begin{align}
    \tau&=\mu + \frac{a}{\varepsilon}w\left(\frac{t}{\varepsilon}\right)H(\yf),\label{eq:controller_tau2}\\
    \dot\mu&=k(v-\rho \omega),
    \label{eq:mu_update2}
\end{align}
\end{subequations}
where $a>0$ and $k>0$ are control gains. Then, the closed-loop system can be written in the analysis coordinates $(x_e,y_e,r,z,\mu)$ as
\begin{subequations}
\label{eq:closed_loop_original2}
\begin{align}
\dot x_e&=
r+\omega(y_e+\rho),
\label{eq:closed_loop_original_xe2}\\
\dot y_e&=
-\omega x_e,
\label{eq:closed_loop_original_ye2}\\
\dot r&=
-\frac{d_\omega}{J}r
-\frac{\rho}{J}
\left[
\mu-\mu^\star
+\frac{a}{\varepsilon}w\left(\frac{t}{\varepsilon}\right)H(\yf)
\right],
\label{eq:closed_loop_original_r2}\\
\dot z&=
-\lambda z+\lambda \psi(s),
\label{eq:closed_loop_original_z2}\\
\dot\mu&=kr,
\label{eq:closed_loop_original_mu2}
\end{align}
\end{subequations}
After the same symmetric-product averaging step as in the previous subsection, the averaged closed-loop system is
\begin{subequations}
\label{eq:fast_averaged2}
\begin{align}
\dot{\xi}&=f_\xi(\xi,\bar{\mu})\\
\dot{\bar \mu}&=k\bar{r}.
\end{align}
\end{subequations}
The equilibrium of \eqref{eq:fast_averaged2} is given by \eqref{eq:averaged_interconnection_equilibrium}.

\begin{proposition}
\label{prop:averaged_interconnection_stability}
Suppose Assumption~\ref{ass:source_field} holds. Then the equilibrium $(\xi^\star,\mu^\star)$ 
is globally asymptotically stable for \eqref{eq:fast_averaged2}. 
\end{proposition}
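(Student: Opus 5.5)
The plan is to construct a global strict-in-part Lyapunov function for \eqref{eq:fast_averaged2} and conclude with LaSalle's invariance principle. The key observation is that, for $\bar s=\bar x_e^2+(\bar y_e+\rho)^2$, the kinematic terms cancel. Indeed,
\begin{equation*}
\dot{\bar s}=2\bar x_e\left(\bar r+\bar\omega(\bar y_e+\rho)\right)-2(\bar y_e+\rho)\bar\omega\bar x_e=2\bar x_e\bar r .
\end{equation*}
Hence $\frac{\rm d}{{\rm d}t}\psi(\bar s)=2\psi'(\bar s)\bar x_e\bar r$ and $\dot{\baryf}=2\psi'(\bar s)\bar x_e\bar r-\lambda\baryf$. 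The gradient term in \eqref{eq:fast_averaged_r} is $-\tfrac{c}{2}\Gamma(\baryf)\psi'(\bar s)\bar x_e$ with $c:=2a^2\rho^2/J^2$. Multiplied by $\bar r$, it is exactly what the derivative of a potential in $\baryf$ and $\bar s$ produces. Because $\Gamma>0$ but $\Gamma(q)q$ is not sign-definite, I split $\Gamma=(\Gamma-\Gamma(0))+\Gamma(0)$. I then take
\begin{equation*}
V:=\frac12\bar r^2+\frac{c}{2}\left[\int_0^{\baryf}\left(\Gamma(q)-\Gamma(0)\right){\rm d}q+\Gamma(0)\left(\psi(\bar s)-\psi(0)\right)\right]+\frac{\rho}{2Jk}(\bar\mu-\mu^\star)^2 .
\end{equation*}

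The first step is to verify that $V$ is positive definite and radially unbounded with respect to the equilibrium $(\xi^\star,\mu^\star)$. Since $\Gamma'>0$, the integral term is nonnegative, vanishes only at $\baryf=0$, and grows at least linearly in $|\baryf|$. By Assumption~\ref{ass:source_field}, $\psi(\bar s)-\psi(0)\ge0$, with equality only when $\bar s=0$, i.e. $(\bar x_e,\bar y_e)=(0,-\rho)$, and it tends to $\infty$ as $\bar s\to\infty$. Boundedness of $\bar s$ and $\baryf$ gives boundedness of $\bar z=\psi(\bar s)-\baryf$. Together these show that sublevel sets of $V$ are compact in $(\bar x_e,\bar y_e,\bar r,\bar z,\bar\mu)$.

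The second step is to differentiate $V$ along \eqref{eq:fast_averaged2}. In the derivative of the $\bar r^2$ term, the bias cross term $-\frac{\rho}{J}(\bar\mu-\mu^\star)\bar r$ cancels against the $\bar\mu$ term because $\dot{\bar\mu}=k\bar r$. The gradient cross term $-\frac c2\Gamma(\baryf)\psi'\bar x_e\bar r$ cancels against $\frac c2(\Gamma(\baryf)-\Gamma(0))2\psi'\bar x_e\bar r\cdot\frac12\cdot 2/2$ plus the $\Gamma(0)\psi(\bar s)$ contribution; I will check these constants carefully. What remains is
\begin{equation*}
\dot V=-\frac{d_\omega}{J}\bar r^2-\frac{c\lambda}{2}\left(\Gamma(\baryf)-\Gamma(0)\right)\baryf\le0,
\end{equation*}
which gives global stability and boundedness of all solutions.

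The last step, and the main obstacle, is the invariance argument, since $\dot V$ is only negative semidefinite. On the largest invariant set in $\{\dot V=0\}$ we have $\bar r\equiv0$ and $\baryf\equiv0$. From $\bar r\equiv0$ it follows that $\bar\mu$ is constant, $\bar\omega\equiv v/\rho\neq0$, and $\bar s$ is constant. Then $\dot{\bar r}\equiv0$ forces $\frac\rho J(\bar\mu-\mu^\star)+\frac c2\Gamma(0)\psi'(\bar s)\bar x_e\equiv0$, so $\bar x_e$ is constant. Since $\dot{\bar x}_e=\bar\omega(\bar y_e+\rho)$, this gives $\bar y_e\equiv-\rho$. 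Then $\dot{\bar y}_e=-\bar\omega\bar x_e=0$ gives $\bar x_e=0$. Consequently $\bar\mu=\mu^\star$, $\bar s=0$, and $\bar z=\psi(0)$. LaSalle's principle then yields global asymptotic stability of $(\xi^\star,\mu^\star)$.
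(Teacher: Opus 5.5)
Your strategy matches the paper's: the same Lyapunov structure ($\frac12\bar r^2$, the split $\Gamma=(\Gamma-\Gamma(0))+\Gamma(0)$ giving a $\psi(\bar s)$ potential plus a $\baryf$-integral, and $\frac{\rho}{2Jk}(\bar\mu-\mu^\star)^2$), followed by the same LaSalle argument on $\{\bar r=0,\ \baryf=0\}$. However, the weight on the potential bracket is off by a factor of two, and with it the claimed cancellation does not occur. Let $B:=\int_0^{\baryf}(\Gamma(q)-\Gamma(0))\,{\rm d}q+\Gamma(0)(\psi(\bar s)-\psi(0))$. Using $\dot{\bar s}=2\bar x_e\bar r$ and $\dot{\baryf}=2\psi'(\bar s)\bar x_e\bar r-\lambda\baryf$, you get $\dot B=2\Gamma(\baryf)\psi'(\bar s)\bar x_e\bar r-\lambda(\Gamma(\baryf)-\Gamma(0))\baryf$. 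So $\frac{c}{2}\dot B$ contributes $+c\,\Gamma(\baryf)\psi'(\bar s)\bar x_e\bar r$, while $\frac{{\rm d}}{{\rm d}t}\frac12\bar r^2$ contributes only $-\frac{c}{2}\Gamma(\baryf)\psi'(\bar s)\bar x_e\bar r$. The residual $+\frac{c}{2}\Gamma(\baryf)\psi'(\bar s)\bar x_e\bar r$ is sign-indefinite. Your $V$ as written is therefore not a Lyapunov function, and your stated $\dot V$ (with $-\frac{c\lambda}{2}$) is inconsistent with your own $V$.

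The fix is to weight $B$ by $c/4$, which equals $\kappa/2$ with $\kappa=a^2\rho^2/J^2$, the paper's choice. Then $\dot V=-\frac{d_\omega}{J}\bar r^2-\frac{c\lambda}{4}(\Gamma(\baryf)-\Gamma(0))\baryf\le0$. Changing a positive weight does not affect positive definiteness or radial unboundedness, so the rest goes through unchanged and coincides with the paper's proof. This includes your compactness argument for sublevel sets, which is more explicit than the paper's, and your invariance argument ($\bar x_e$ constant, hence $\bar y_e=-\rho$, hence $\bar x_e=0$, $\bar\mu=\mu^\star$, $\bar z=\psi(0)$).
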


\begin{proof}
Let $\kappa:=a^2\rho^2/J^2$. Consider
\begin{equation}
\begin{aligned}
V&:=\frac12\bar r^2+\frac{\kappa\Gamma(0)}{2}\big(\psi(\bar s)-\psi(0)\big)\\
&\quad+\frac{\kappa}{2}\int_0^{\baryf}\left(\Gamma(\zeta)-\Gamma(0)\right){\rm d}\zeta+\frac{\rho}{2Jk}(\bar\mu-\mu^\star)^2.
\end{aligned}
\end{equation}
Since $\psi'(s)>0$ and $\Gamma'(\zeta)>0$, $V$ is positive definite
with respect to $(\xi^\star,\mu^\star)$ and radially unbounded. Differentiating $V$ along \eqref{eq:fast_averaged2}
and using $\dot{\bar s}=2\bar x_e\bar r$ and $\dot{\bar{\mathsf{y}}}_{\mathsf{f}}=2\psi'(\bar s)\bar x_e\bar r-\lambda\baryf$ yield
\begin{equation}
\dot V
=
-\frac{d_\omega}{J}\bar r^2
-\frac{\kappa\lambda}{2}
\left(\Gamma(\baryf)-\Gamma(0)\right)\baryf\le 0.
\label{eq:Vxi_dot}
\end{equation}
Since $\Gamma$ is strictly increasing, $(\Gamma(\baryf)-\Gamma(0))\baryf=0$ if and only if $\baryf=0$. Hence $\dot V=0$ if and only if $\bar r=0$ and $\baryf=0$. Let $\mathcal M$ be the largest invariant set contained in $\{(\xi,\bar\mu):\bar r=0, \baryf=0\}$. On $\mathcal M$, one has $\dot{\bar\mu}=k\bar r=0$, $\bar\omega=v/\rho$, and $\dot{\bar s}=2\bar x_e\bar r=0$. Invariance of $\bar r=0$ gives $0=-(\rho/J)(\bar\mu-\mu^\star)-\kappa\Gamma(0)\psi'(\bar s)\bar x_e$. Since $\bar\mu$ and $\bar s$ are constant on $\mathcal M$, differentiating this identity along $\mathcal M$ yields $0=-\kappa\Gamma(0)\psi'(\bar s)(v/\rho)(\bar y_e+\rho)$. Thus, $\bar y_e=-\rho$. Invariance of $\bar y_e=-\rho$ gives $\bar x_e=0$, and the preceding identity gives $\bar\mu=\mu^\star$. Finally, $\baryf=0$ and $\bar s=0$ imply $\bar z=\psi(0)$. Therefore $\mathcal M$ is the singleton $(\xi^\star,\mu^\star)$. By the invariance principle, the equilibrium is globally attractive. Together with global stability, this proves global asymptotic stability.\qed
\end{proof}

By the symmetric-product approximation \cite{bullo2002averaging}, the solutions of \eqref{eq:closed_loop_original2} in the variables of \eqref{eq:change-of-variable} approximate the solutions of \eqref{eq:fast_averaged2} on compact time intervals.
In the terminology of Definition \ref{def:2}, we state the main result as a consequence of Propositions \ref{prop:averaged_interconnection_stability} and Lemma \ref{lem:1}(ii).

\begin{theorem}
\label{thm:main_practical_interconnection2}
Suppose Assumption~\ref{ass:source_field} holds. Choose the shaping function $H$ such that \eqref{eq:H_condition} is satisfied. Then, the equilibrium $(\xi^\star,\mu^\star)$ is semi-globally practically uniformly asymptotically stable for \eqref{eq:closed_loop_original2} in the variables of \eqref{eq:change-of-variable}.
\end{theorem}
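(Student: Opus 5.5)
The plan is to obtain the theorem directly from Lemma~\ref{lem:1}(ii). Two ingredients are needed. First, the equilibrium $(\xi^\star,\mu^\star)$ must be globally asymptotically stable for the averaged system \eqref{eq:fast_averaged2}; this is exactly Proposition~\ref{prop:averaged_interconnection_stability}. Second, the solutions of \eqref{eq:closed_loop_original2}, written in the variables \eqref{eq:change-of-variable}, must approximate the solutions of \eqref{eq:fast_averaged2} in the sense of the approximation definition of Section~\ref{sec:practical_stability}. This means: for every compact $K$ and every $\eta,T>0$ there is $\varepsilon_0$ such that trajectories starting in $K$ remain $\eta$-close on $[t_0,t_0+T]$. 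Once both are in place, Lemma~\ref{lem:1}(ii) yields semi-global practical uniform asymptotic stability. The parameters $\lambda,k,a$ are fixed, and $\varepsilon$ is the only small parameter, with $\phi^\varepsilon_t$ given by \eqref{eq:change-of-variable}.

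The main work is verifying the approximation property, and I would do it in three steps.

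\textbf{Step 1: rewrite the closed loop in the new variables.} I would write \eqref{eq:closed_loop_original2} in the variables $(\tilde x_e,\tilde y_e,\tilde r,\tilde z,\tilde\mu)$. Differentiating $\tilde r=r+(\rho a/J)W(t/\varepsilon)H(\psi(s)-z)$ makes the $O(1/\varepsilon)$ term $-(\rho a/(J\varepsilon))w H$ cancel exactly. What remains is a vector field of the form $F_0(x)+W(t/\varepsilon)F_1(x)+W(t/\varepsilon)^2F_2(x)$, with no $1/\varepsilon$ factor. In this expression, $r$ and $\omega$ are replaced by $\tilde r-(\rho a/J)WH$ and its counterpart. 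The $W^2$ terms arise from the $\theta$-dependence of $s$ through $\dot s = 2x_e r$, together with the $\dot\mu=kr$ equation and the kinematics. Since $\mu$ now evolves on the normal time scale, the $\dot{\tilde\mu}=k\tilde r - k(\rho a/J)WH$ equation must be treated explicitly rather than as frozen. That is harmless, because it contributes only terms linear in $W$, which average to zero.

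\textbf{Step 2: apply first-order averaging.} Under Assumption~\ref{ass:source_field} and $H\in C^2$, the transformed right-hand side is $C^1$ in the state, $T_w$-periodic in $t/\varepsilon$, and bounded with Lipschitz constants uniform on compacts. Classical first-order averaging on compact sets and finite horizons (or equivalently the symmetric-product theorem of \cite{bullo2002averaging}) then applies. Averaging with $\overline{W}=0$ and $\overline{W^2}=1/2$ from \eqref{eq:w_condition} should reproduce exactly \eqref{eq:fast_averaged2}. In particular, the $-\tfrac{a^2}{4}\langle Y_1:Y_1\rangle$ term gives the $-\kappa\Gamma\psi'\bar x_e$ contribution. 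I would check this consistency explicitly. The distance between solutions is then $O(\varepsilon)$, uniformly in $t_0$, because the system is autonomous apart from the periodic factor.

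\textbf{Step 3: make the bounds uniform over $K$.} Uniformity over initial conditions in $K$ comes from the standard argument. The averaged trajectory stays in $K$ on $[t_0,t_0+T]$. A Gronwall estimate on the $\eta$-neighborhood of $K$, where all constants are uniform, then keeps the true trajectory in that neighborhood for $\varepsilon$ small.

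I expect the main obstacle to be Step 1 and its regularity bookkeeping. Computing $W^2$-terms correctly requires $H(\psi(s)-z)$ to be differentiated along the flow. The remainder terms also depend on $\psi''$ and $H''$, so local Lipschitzness must be confirmed; this holds since $\psi\in C^2$ and $H\in C^2$. One should also note that $\omega=(v-r)/\rho$ is globally defined, so no singularities arise. A final check is that existence of solutions on $[t_0,\infty)$, assumed in Section~\ref{sec:practical_stability}, holds, or can be replaced by the finite-escape-free property on compact horizons, which is all that Lemma~\ref{lem:1} actually uses.
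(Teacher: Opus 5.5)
Your proposal is correct and follows the same route as the paper. The paper combines three ingredients: the approximation property, which it simply asserts by citing the symmetric-product approximation of \cite{bullo2002averaging}; global asymptotic stability of the averaged system from Proposition~\ref{prop:averaged_interconnection_stability}; and Lemma~\ref{lem:1}(ii). Your Steps 1--3 make the approximation step explicit, and the computation checks out. After the change of variables \eqref{eq:change-of-variable}, the $1/\varepsilon$ term in $\dot{\tilde r}$ cancels. The only $W^2$ term is $-2(\rho a/J)^2W^2\Gamma(\yf)\psi'(s)x_e$ in $\dot{\tilde r}$, and averaging with $\overline{W^2}=1/2$ reproduces \eqref{eq:fast_averaged2}. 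This justifies the step more carefully than the paper's frozen-parameter argument does.
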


\begin{remark}
The velocity-assisted design shows how the stability region can be enlarged when velocity measurements are available. The additional measurement of the yaw-rate tracking error allows the bias to be adjusted directly, instead of being tuned through the steady input-output map used in the output-feedback design. This removes the need to rely only on local exponential stability of the fixed-bias averaged dynamics. As a result, the averaged closed-loop system admits a global Lyapunov function. The conclusion for the original system remains semi-global and practical because it is obtained through averaging. Nevertheless, the result shows that the source-centered orbit can be practically stabilized from arbitrarily large prescribed compact sets without measuring the vehicle position, heading, source location, field gradient, inertia, or damping coefficient.
\end{remark}

\section{Numerical Simulations}
\label{sec:simulation}

This section illustrates the proposed torque-tuning designs through numerical simulations. All parameters are given in SI units. The source is placed at the origin, i.e., $(x^\star,y^\star)=(0,0)$, and the scalar field is chosen as
\begin{equation}
\Phi(x,y)=\psi(x^2+y^2),
\label{eq:sim_field}
\end{equation}
where $\psi(s):=s$. The physical parameters are $J=0.06$, $d_\omega=0.12$, $\rho=0.15$, and $v=0.8$. Hence, $\mu^\star=d_\omega v/\rho=0.64$. The estimated bounds are $d_{\min}=0.01$, $d_{\max}=0.2$, $v_{\min}=0.5$, and $v_{\max}=1$, which give $\mathcal I=[0.03,1.4]$.

For the output-feedback design \eqref{eq:control_law}, the controller parameters are selected as $a=0.2$, $\varepsilon=0.02$, $b=1$, $\Omega=0.005$, $\delta=0.2$, and $\lambda=2$. For the velocity-assisted design \eqref{eq:control2}, the same parameters are used, with the additional gain $k=0.0015$. In both cases, $\mu(0)=0.05$, and the washout filter is initialized as $z(0)=\ym(0)$. The excitation signals are chosen as $u_1=\cos$, $u_2=\sin$, and $w=\sin$, which satisfy \eqref{eq:u_normalization} and \eqref{eq:w_condition}. The shaping function is $H(q):=3e^{q/30}$, which satisfies \eqref{eq:H_condition}. The initial vehicle state is $(x(0),y(0),\theta(0),\omega(0))=(10,10,0,0)$.

Figures~\ref{fig:sim_traj} and \ref{fig:sim_signals} show the response under the output-feedback design \eqref{eq:control_law}. The vehicle center approaches a small orbit around the source, and the scalar measurement $\ym(t)$ decreases to a small neighborhood of zero. Since the theoretical guarantee for this design is local, this simulation illustrates performance beyond the certified region.

Figures~\ref{fig:sim_traj2} and \ref{fig:sim_signals2} show the response under the velocity-assisted design \eqref{eq:control2}. Compared with the output-feedback design, the transient is shorter, and the vehicle reaches the source-centered orbit more directly. The scalar measurement also converges faster to a neighborhood of zero.

Figure~\ref{fig:mu} uses a longer horizon to show the slow bias $\mu$ evolution. The dashed line denotes $\mu^\star$. The velocity-assisted design drives $\mu(t)$ toward $\mu^\star$, as predicted by the averaged analysis. The output-feedback design exhibits larger slow oscillations in $\mu(t)$, which is consistent with the Lie-bracket bias-tuning mechanism.

\begin{figure}
\centering
\includegraphics[width=0.8\columnwidth]{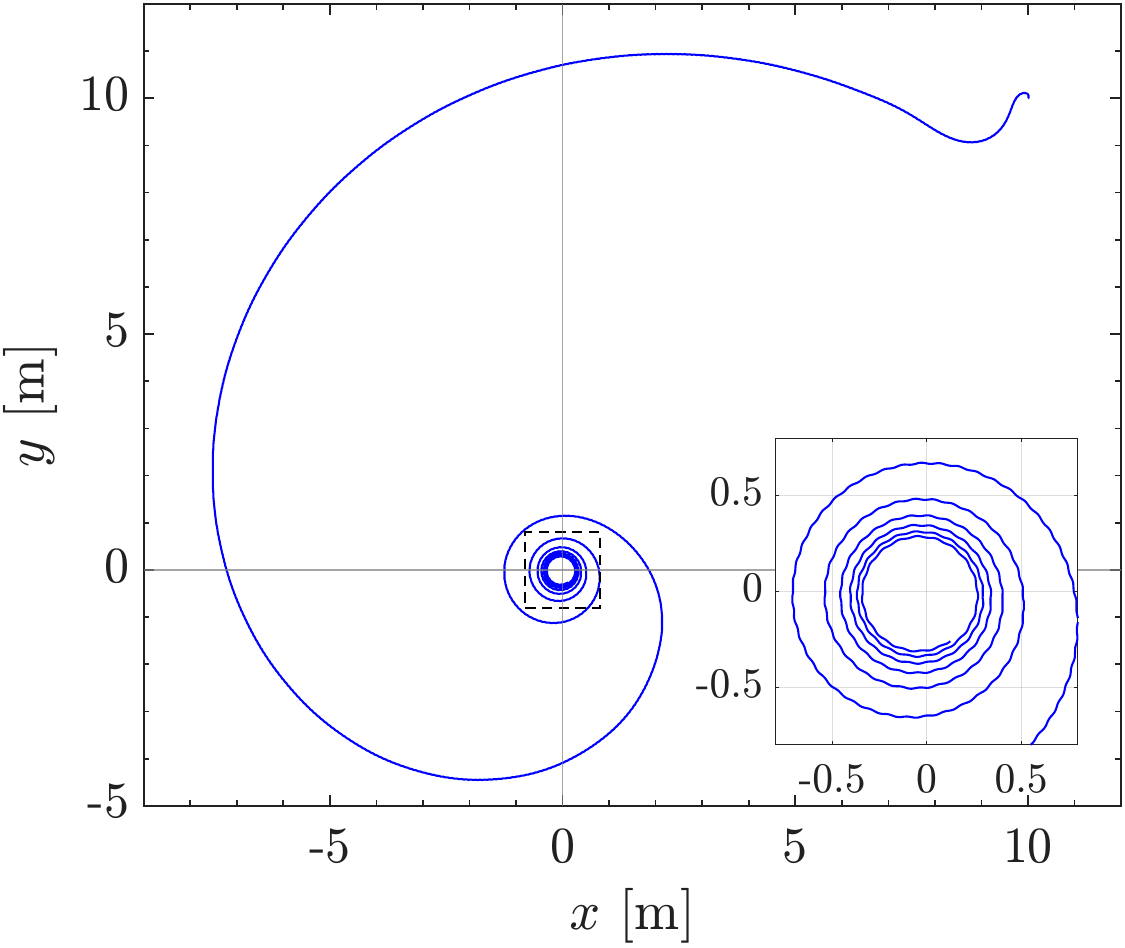}
\caption{Planar trajectory of the vehicle center under the output-feedback design \eqref{eq:control_law}. The source is located at the origin.}
\label{fig:sim_traj}
\end{figure}

\begin{figure}
\centering
\includegraphics[width=0.85\columnwidth]{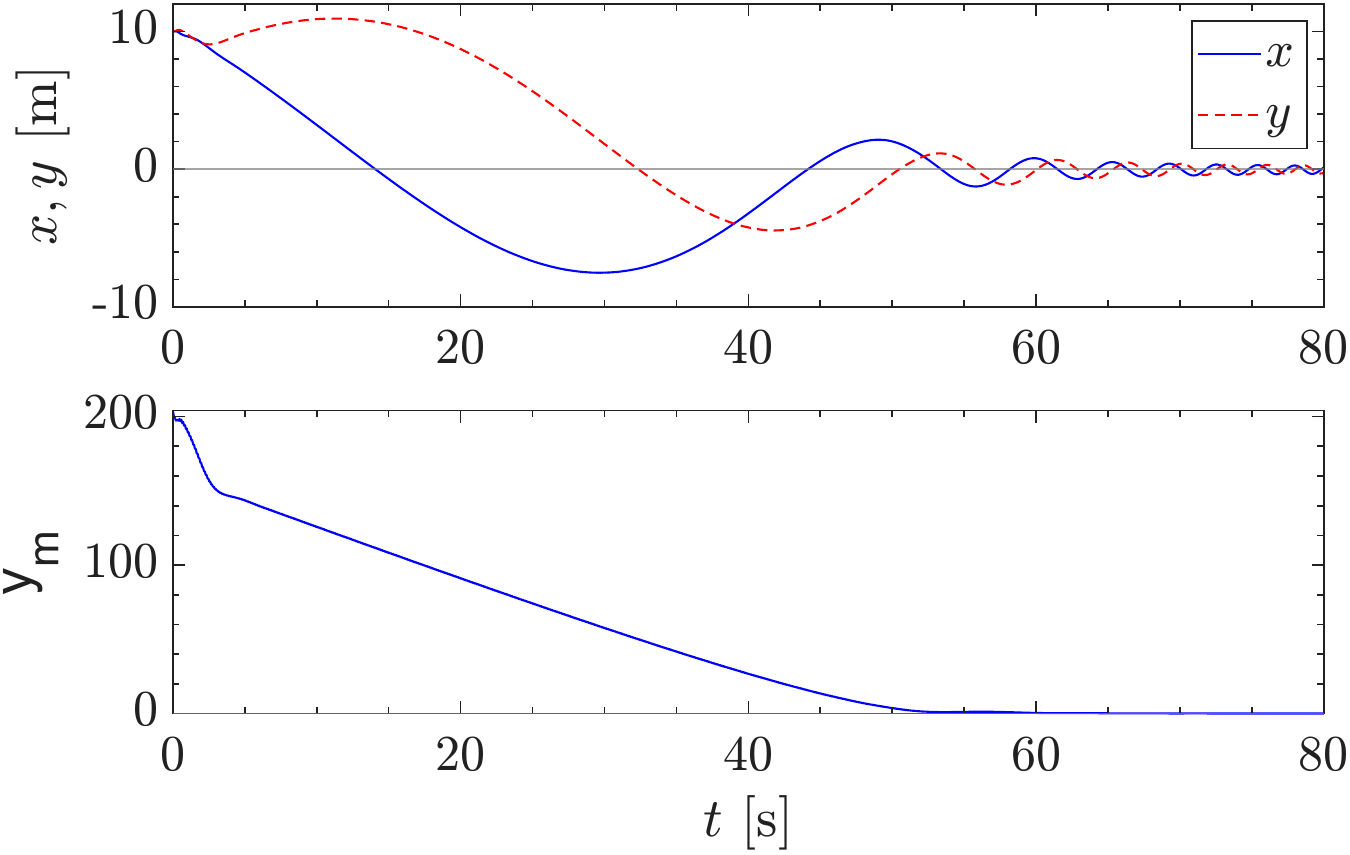}
\caption{\unboldmath Closed-loop position coordinates and scalar measurement under the output-feedback design \eqref{eq:control_law}.}
\label{fig:sim_signals}
\end{figure}

\begin{figure}
\centering
\includegraphics[width=0.8\columnwidth]{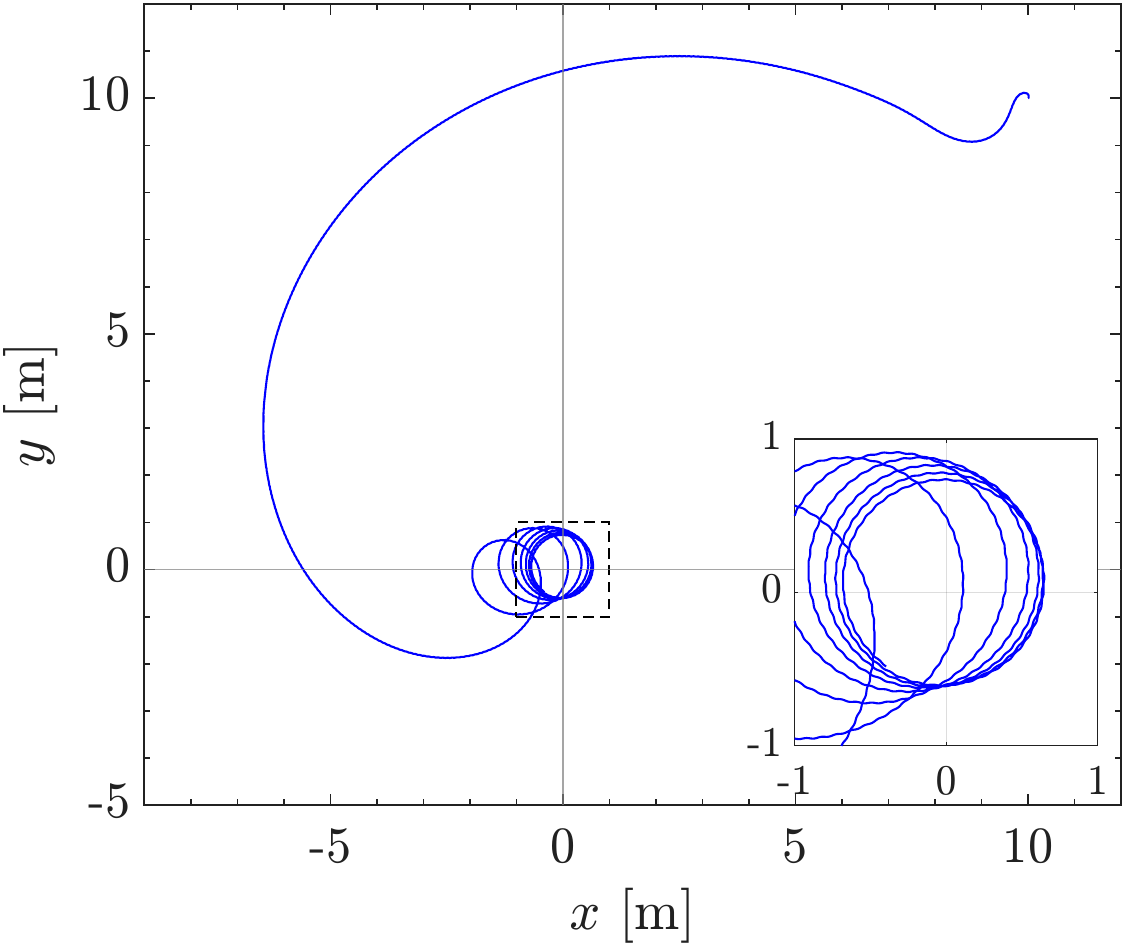}
\caption{Planar trajectory of the vehicle center under the velocity-assisted design \eqref{eq:control2}. The source is located at the origin.}
\label{fig:sim_traj2}
\end{figure}

\begin{figure}
\centering
\includegraphics[width=0.85\columnwidth]{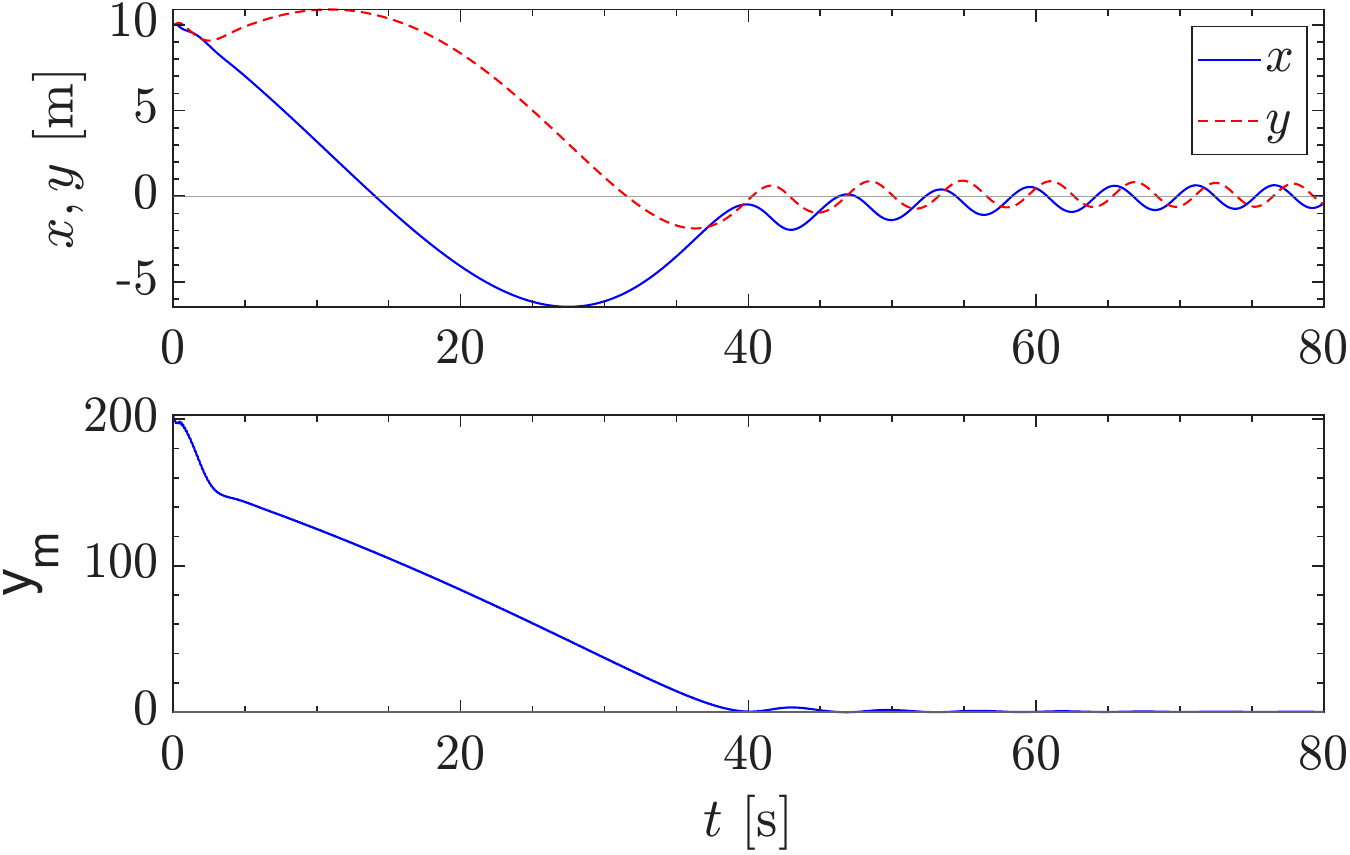}
\caption{\unboldmath Closed-loop position coordinates and scalar measurement under the velocity-assisted design \eqref{eq:control2}.}
\label{fig:sim_signals2}
\end{figure}

\begin{figure}
\centering
\includegraphics[width=0.85\columnwidth]{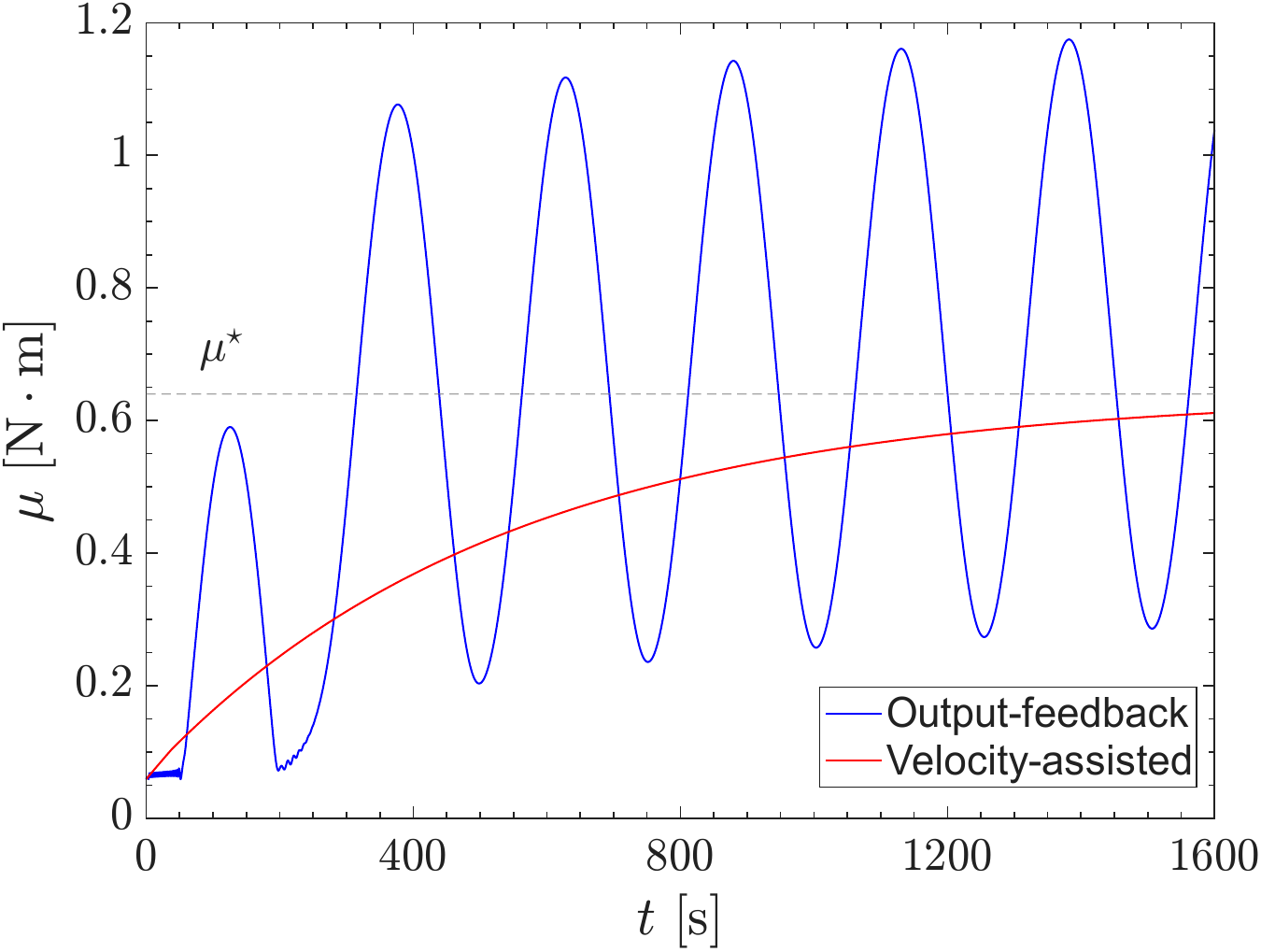}
\caption{\unboldmath Time histories of the yaw-torque bias $\mu(t)$ under the output-feedback and velocity-assisted designs. The dashed line denotes the nominal bias $\mu^\star=0.64$.}
\label{fig:mu}
\end{figure}

\section{Conclusion}
\label{sec:conclusion}

This paper developed two torque-tuning source-seeking designs for a nonholonomic vehicle with a laterally displaced scalar sensor. A fast oscillatory torque input was used to generate an averaged steering effect through a symmetric-product approximation. For the output-feedback design, a slow Lie-bracket bias update was introduced using only the real-time measurement $\ym(t)$, leading to local practical stability. When measurements of $v$ and $\omega$ are additionally available, the bias was tuned through the yaw-rate tracking error. This velocity-assisted design yields a globally asymptotically stable averaged system and, consequently, semi-global practical stability for the original closed-loop system. 
Future work will address robustness to measurement noise, actuator saturation, and experimental validation on robotic platforms.




\bibliographystyle{model1-num-names}

\bibliography{v1-main}

\end{document}